\documentclass[a4paper,onecolumn,accepted=2021-12-02]{quantumarticle}
\pdfoutput=1
\usepackage{mathrsfs} 
\usepackage{bbm} 
\usepackage{bm}  

\usepackage{subcaption}
\usepackage{caption}
\usepackage[dvipsnames]{xcolor}
\definecolor{beamer@blendedblue}{rgb}{0.2,0.2,0.7}      
\usepackage{framed}
\definecolor{shadecolor}{rgb}{0.9,0.9,0.9}
\usepackage{mathtools}
\usepackage{amsmath}
\allowdisplaybreaks
\usepackage[shortlabels]{enumitem}
\setlist{itemsep=0pt,topsep=2pt,partopsep=1pt}
\usepackage[titletoc,title]{appendix}
\usepackage{graphicx,epic,eepic,epsfig,amsmath,latexsym,amssymb,verbatim,color}
\usepackage{dsfont}

\usepackage{booktabs}
\usepackage{multirow}

\usepackage{authblk} 
\usepackage{tcolorbox}
\usepackage{relsize}

\usepackage{float}
\usepackage{tikz}
\usepackage{hyperref}

\usepackage{bookmark}

\usepackage{url}
\usepackage{theorem}
\newtheorem{definition}{Definition}

\newtheorem{lemma}[definition]{Lemma}

\newtheorem{theorem}[definition]{Theorem}
\newtheorem{corollary}[definition]{Corollary}

\newtheorem{property}[definition]{Property}

\newcommand\ce[1]{\nu(#1)}

\newcommand{\vb}{\vspace*{\baselineskip}} 


\newcommand\JJQ[1]{{\color{black} #1}}

\usepackage{pifont}
\definecolor{Gray}{gray}{0.92}
\definecolor{Gray2}{gray}{0.75}
\definecolor{maroon}{cmyk}{0,0.87,0.68,0.32}

\def\squareforqed{\hbox{\rlap{$\sqcap$}$\sqcup$}}
\def\qed{\ifmmode\squareforqed\else{\unskip\nobreak\hfil
\penalty50\hskip1em\null\nobreak\hfil\squareforqed
\parfillskip=0pt\finalhyphendemerits=0\endgraf}\fi}
\def\endenv{\ifmmode\;\else{\unskip\nobreak\hfil
\penalty50\hskip1em\null\nobreak\hfil\;
\parfillskip=0pt\finalhyphendemerits=0\endgraf}\fi}
\newenvironment{proof}{\noindent \textbf{{Proof~}}}{\hfill $\blacksquare$}

\newcounter{remark}
\newenvironment{remark}[1][]{\refstepcounter{remark}\par\medskip\noindent%
\textbf{Remark~\theremark #1} }{\medskip}

\newcounter{example}

\mathchardef\ordinarycolon\mathcode`\:
\mathcode`\:=\string"8000
\def\vcentcolon{\mathrel{\mathop\ordinarycolon}}
\begingroup \catcode`\:=\active
  \lowercase{\endgroup
  \let :\vcentcolon
  }

\DeclareFontFamily{U}{mathx}{\hyphenchar\font45}
\DeclareFontShape{U}{mathx}{m}{n}{<-> mathx10}{}
\DeclareSymbolFont{mathx}{U}{mathx}{m}{n}
\DeclareMathAccent{\widebar}{0}{mathx}{"73}


\newcommand{\wt}[1]{\widetilde{#1}}

\ifcsname{ul}\endcsname%
\else%
\fi%

\renewcommand{\iff}{\textit{iff.~}}
\newcommand{\sbar}{\;\rule{0pt}{9.5pt}\middle|\;}

\definecolor{darkblue}{RGB}{0,76,156}
\definecolor{darkkblue}{RGB}{0,0,153}
\definecolor{blue2}{RGB}{102,178,255}
\definecolor{darkred}{RGB}{195,0,0}




\let\emptyset\varnothing
\let\leq\leqslant
\let\geq\geqslant
\DeclareFontFamily{U}{mathb}{\hyphenchar\font45}
\DeclareFontShape{U}{mathb}{m}{n}{
<-6> mathb5 <6-7> mathb6 <7-8> mathb7
<8-9> mathb8 <9-10> mathb9
<10-12> mathb10 <12-> mathb12
}{}
\DeclareSymbolFont{mathb}{U}{mathb}{m}{n}
\DeclareMathSymbol{\succneq}{\mathrel}{mathb}{"CF}

\newcommand{\ket}[1]{\left\vert #1 \right\rangle}
\newcommand{\bra}[1]{\left\langle #1 \right\vert}
\newcommand{\ketbra}[2]{\vert#1\rangle\!\langle#2\vert}
\newcommand\proj[1]{\vert #1 \rangle\!\langle #1 \vert}
\newcommand{\linear}[1]{\mathscr{L}(#1)}
\newcommand{\hptp}[1]{\mathscr{T}^\dagger(#1)}
\newcommand{\herm}[1]{\mathscr{L}^\dagger(#1)}
\newcommand{\pos}[1]{\mathscr{P}(#1)}


\newcommand{\channel}[1]{\mathscr{C}(#1)}
\newcommand{\subchannel}[1]{\mathscr{C}_{\bullet}(#1)}

\newcommand{\ox}{\otimes}
\newcommand{\1}{I}
\DeclareMathOperator{\tr}{Tr}  
\newcommand{\id}{\operatorname{id}}

\newcommand{\opn}[1]{\operatorname{#1}}
\newcommand{\norm}[2]{\ensuremath{\left\lVert#1\right\rVert_{#2}}}%













\newcommand*{\cA}{\mathcal{A}}

\newcommand*{\cD}{\mathcal{D}}

\newcommand*{\cF}{\mathcal{F}}

\newcommand*{\cH}{\mathcal{H}}
\newcommand*{\cI}{\mathcal{I}}

\newcommand*{\cK}{\mathcal{K}}

\newcommand*{\cM}{\mathcal{M}}
\newcommand*{\cN}{\mathcal{N}}
\newcommand*{\cO}{\mathcal{O}}
\newcommand*{\cP}{\mathcal{P}}
\newcommand*{\cQ}{\mathcal{Q}}
\newcommand*{\cR}{\mathcal{R}}
\newcommand*{\cS}{\mathcal{S}}
\newcommand*{\cT}{\mathcal{T}}
\newcommand*{\cU}{\mathcal{U}}
\newcommand*{\cV}{\mathcal{V}}

\newcommand*{\cX}{\mathcal{X}}

\newcommand*{\bC}{\mathbb{C}}

\newcommand*{\bR}{\mathbb{R}}

\newcommand*{\bZ}{\mathbb{Z}}

\usetikzlibrary{shadows.blur}
\usetikzlibrary{shapes.symbols}
\usetikzlibrary{intersections,
                positioning,
                shapes}
\pgfdeclarelayer{shadow}
\pgfsetlayers{shadow,main}
\usetikzlibrary{shadows.blur}
\usetikzlibrary{shapes.symbols}
\usetikzlibrary{intersections,
                positioning,
                shapes}
\usetikzlibrary{decorations.pathreplacing}

\tikzset{%
   Channel/.style =
    { thick, %
      draw = black, %
      fill = white, %
      align=center, %
      rectangle, %
      rounded corners, %
    }, %
    CWire/.style =
    { line width=0.5pt, %
      line cap=round,
      double, %
      double distance = 1.2pt, %
      -, %
    },
    QWire/.style =
    { line width=2.5pt, %
      line cap=round,
      -, %
    },
}

%
%
%
%
%
%
%

%

%

\usepackage{thmtools}

%


\usepackage[numbers,sort&compress]{natbib}
\begin{document}
\title{{Physical Implementability of Linear Maps 
        and Its Application in Error Mitigation}}

\author{Jiaqing Jiang}
\affiliation{Institute for Quantum Computing, Baidu Research, Beijing 100193, China}
\affiliation{Computing and Mathematical Sciences, California Institute of Technology, Pasadena, CA USA}

\author{Kun Wang}
\affiliation{Institute for Quantum Computing, Baidu Research, Beijing 100193, China}

\author{Xin Wang}
\email{wangxin73@baidu.com}
\affiliation{Institute for Quantum Computing, Baidu Research, Beijing 100193, China}

\maketitle

\begin{abstract}
Completely positive and trace-preserving maps characterize physically implementable quantum operations.  On the other hand, general linear maps, such as positive but not completely positive maps, which can not be physically implemented, are fundamental ingredients in quantum information, both in theoretical and practical perspectives.
This raises the question of how well one can simulate or approximate the action of a general linear map by physically implementable operations. 
In this work, we introduce a systematic framework to resolve this task using the quasiprobability decomposition technique. We decompose a target linear map into a linear combination of physically implementable operations
and introduce the \emph{physical implementability} measure as the least amount of negative portion that the quasiprobability must pertain, which directly quantifies the cost of simulating a given map using physically implementable quantum operations. We show this measure is efficiently computable by semidefinite programs and prove several properties of this measure, such as faithfulness,
additivity, and unitary invariance. We derive lower and upper bounds in terms of the Choi operator's trace norm and obtain analytic expressions for several linear maps of practical interests. Furthermore, we endow this measure with an operational meaning within the quantum error mitigation scenario: 
it establishes the lower bound of the sampling cost achievable via the quasiprobability decomposition technique.
In particular, for parallel quantum noises, we show that global error mitigation has {no advantage} over local error mitigation.
\end{abstract}


\section{Introduction}


The postulates of quantum mechanics prescribe that
the evolution of a closed global quantum system must be unitary~\cite{nielsen2011quantum}.
The physically implementable quantum operations are then obtained in the reduced dynamics of subsystems
and are mathematically characterized by completely positive and trace-preserving
maps (CPTPs)~\cite{kraus1983states}.
Nevertheless, many other linear maps such as positive but not completely positive maps,
which are impossible to be physically implemented, are also fundamental ingredients
from theoretical and practical perspectives.
On the one hand, positive maps play an essential role in quantum information processing;
for any entangled state, there exists a positive map that determines whether or not the
given state is entangled~\cite{horodecki1996Separability}.
On the other hand, under certain conditions, the reduced dynamics
might not be captured within the completely positive map formalism~\cite{pechukas1994reduced,carteret2008dynamics}.
Thus we have to relax the completely positivity condition to less conservative ones such as positive maps.
\JJQ{In summary, these exceptions witness the importance of positive maps in quantum theory
and motive us to raise the following fundamental problem: 
\begin{center}
\textit{How to simulate the action of `non-physical' linear maps using physical operations?}
\end{center}}

\JJQ{The Structural Physical Approximation (SPA)
method~\cite{horodecki2003limits,horodecki2002method,fiuravsek2002structural,korbicz2008structural} offers a
structural way to resolve one important case, i.e., approximating the
positive but not completely positive maps (non-physical) using completely positive maps (physical).}
 Briefly, SPA performs a convex mixture of the original positive map with the completely depolarizing channel, where the completely depolarizing channel  projects the input state onto the maximally
mixed state, i.e., $\Omega(\rho) = \1/d, \forall\rho$. Such a mixture is always feasible
since \emph{any} linear map (not necessarily positive) mixed with a sufficiently
large amount of $\Omega$ will result in a completely positive map~\cite{zyczkowski1998volume,gurvits2002largest}.
\JJQ{However, to make the SPA efficient, one should mix as little amount of $\Omega$ as possible.}
We call the completely positive map generated by the least $\Omega$
the SPA of the original map.
The SPA method finds novel applications in entanglement detection
by yielding experimentally implementable physical processes for positive maps that are non-physical.

\JJQ{In this paper, we introduce a systematic framework to resolve the above fundamental problem and establish an operational and quantitative study of the physical implementability (or non-physicality from another perspective) of general linear maps, using the quasiprobability decomposition technique~\cite{crowder2015linearization,temme2017error,howard2017application,endo2018practical,takagi2020optimal} and theoretical tools from semidefinite programming.}
More specifically, we decompose the target linear map, 
i.e., Hermitian- and trace-preserving maps (not necessarily positive),
into a linear combination of physically implementable quantum operations, i.e., CPTPs.
Then the \emph{physical implementability} of the target map is defined to be
the least amount of negative portion that the quasiprobability must pertain.
To some extent, one can view our method as a generalization of the SPA method
by enlarging the resource set from the completely depolarizing channel to any CPTPs.
This physical implementability measure
is efficiently computable via semidefinite programming~\cite{Vandenberghe1996}. Besides, it
 possesses many desirable properties,
such as faithfulness, additivity with respect to tensor products, unitary channel invariance,
and monotonicity under superchannels.
For general target maps, we derive lower and upper bounds in terms of the Choi operator's trace norm. This bound is tight in the sense of equalities saturated by certain channels. For several linear maps of practical interests, we obtain analytic
expressions. \JJQ{Notably, by considering the inverse of a quantum channel as the target linear map, we endow this measure with an operational interpretation within the quantum error mitigation scenario as it quantifies the lower bound of the sampling cost achievable via the quasiprobability decomposition technique. This result establishes the  limits and delivered meaningful insights to quantum error mitigation schemes using the quasiprobability decomposition method.}


\paragraph*{Outline and main contributions}

The outline and main contribution of this paper can be summarized as follows:
\begin{itemize}
  \item In Section~\ref{sec:preliminaries},
        we set the notation and formally define the concept of invertible linear maps.
  \item In Section~\ref{sec:physical implementability},
        we introduce the \emph{physical implementability} measure to characterize
        how well a linear map can be physically implemented or approximated.
        We show this measure is efficiently computable by semidefinite programs.
        We prove several properties of this measure, such as faithfulness, additivity,
        and unitary channel invariance.
        What's more, we derive bounds in terms of the Choi operator's trace norm
        and obtain analytic expressions for several linear maps of practical interests.
  \item In Section~\ref{sec:error mitigation},
        we enrich the proposed physical implementability measure \JJQ{with}
        an operational interpretation within the quantum error mitigation framework.
        It 
        establishes the lower bound of the sampling cost achievable  via the quasiprobability decomposition technique.  For parallel quantum noises, we prove that global error mitigation has \textit{no advantage} over local error mitigation,
        i.e., dealing with quantum noises individually.
  \item In Appendix~\ref{sec:robustness}, we introduce the \emph{robustness} of physical implementability
        from the resource theory perspective.
        We discuss the relationship between two seemingly different quantities:
        physical implementability and robustness.
        We show that these two quantities
        are  equivalent in some sense.
\end{itemize}

\section{Preliminaries}\label{sec:preliminaries}

In this section, we set the notations and define quantities that will be used throughout this paper.

\subsection{Notations}
We label different quantum systems by capital Latin letters (e.g., $A,B,R$).
The corresponding Hilbert spaces of these quantum systems 
are denoted as $\cH_A,\cH_B,\cH_R$, respectively. Throughout this paper, we only consider quantum systems of finite dimensions.
Systems with the same letter are assumed to be isomorphic: $A'\cong A$.
Multipartite quantum systems are described by tensor
product spaces, i.e., $\cH_{AB}=\cH_A\ox\cH_B$.
We use these labels as subscripts or superscripts to indicate which system the corresponding
mathematical object belongs to if necessary.
We drop the scripts when they are evident from the context.

We denote by $\linear{\cH_A}$ the set of linear operators 
and by $\1_A$ the identity operator in  the system $A$.
For a linear operator $X\in\linear{\cH_A}$,
we use $X^T$ to denote its transpose and $X^\dagger$ to denote its conjugate transpose.
The trace norm of $X$ is defined as $\norm{X}{1}:=\tr\sqrt{X^\dagger X}$.
The spectral norm of $X$ is defined as $\norm{X}{\infty}:= \sigma_{\max}(X)$,
where $\sigma_{\max}(X)$ is the largest singular value of $X$.
We denote by $\herm{\cH_A}$ the set of Hermitian operators and
$\pos{\cH_A}$ the set of positive semidefinite operators in the  system $A$.
Quantum states are positive semidefinite operators with unit trace. 
We write $X\geq0$ if and only if $X$ is positive semidefinite.

A linear map $\cN_{A\to B}$ is a linear map that transforms linear operators in system $A$ to linear operators in system $B$, i.e., $\cN_{A\to B}:\linear{\cH_A}\rightarrow \linear{\cH_B}$. 
 We use the calligraphic letters (e.g., $\cM$, $\cN$, $\cO$)
to represent the linear maps\footnote{Throughout this work, we are only interested in the linear maps for which the input and output quantum systems have the
same dimension.} and use $\id$ to represent  the identity map.
Let $\tr$ denote the trace function.
We say the linear map $\cN_{A\to A}$
is {trace-preserving} (TP) if $\tr[\cN(X)]=\tr[X]$ for arbitrary $X\in\linear{\cH_A}$,
is {trace non-increasing} (TN) if $\tr[\cN(X)]\leq\tr[X]$ 
for arbitrary $X\in\linear{\cH_A}$,
is {Hermitian-preserving} (HP) if $\cN(X)\in\herm{\cH_A}$ for arbitrary $X\in\herm{\cH_A}$,
is {positive} if $\cN(X)\in\pos{\cH_A}$ for arbitrary $X\in\pos{\cH_A}$,
and is {completely positive} (CP) if $\id_{R}\ox\cN$ is positive for
arbitrary reference system $R$.
We call a linear map CPTP if it is both completely positive and trace-preserving,
\emph{CPTN} if it is completely positive and trace non-increasing,
\emph{HPTP} if it is Hermitian- and trace-preserving.
Note that CPTP and CPTN maps are also known as quantum channels and quantum subchannels
in quantum information theory.

Let $A$ be a $d$-dimensional quantum system and let $A'$ be isomorphic to $A$.
A maximally entangled state of rank $d$ in $A'A$ is defined
as $\ket{\Psi}_{A'A}:=\sum_{i=0}^{d-1}\ket{ii}/\sqrt{d}$, where
$\{\ket{i}\}_{i=0}^{d-1}$ forms an orthonormal basis of the system $A$.
We denote its unnormalized version by $\ket{\Gamma}:=\sqrt{d}\ket{\Psi}$.
Given a linear map $\cN_{A\to A}$, its Choi operator~\cite{choi1975completely} is defined as
\begin{align}\label{eq:Choi}
    J_{\cN}^{A'A}
:= (\id_{A'}\otimes \cN_{A\to A})(\proj{\Gamma}_{A'A})
 = \sum_{i,j=0}^{d-1} \ketbra{i}{j}_{A'}\otimes \cN_{A\to A}(\ketbra{i}{j}_A).
\end{align}
By the Choi-Jamio\l{}kowski isomorphism~\cite{choi1975completely,jamiolkowski1972linear},
$\cN$ is completely positive if and only if $J_{\cN}\geq 0$,
is trace-preserving if and only if $\tr_A{J_\cN}= \1_{A'}$, where
$\1_{A'}$ is the identity operator in $A'$,
is trace non-increasing if and only if $\tr_A{J_\cN}\leq\1_{A'}$, and
is Hermitian preserving if and only if $J_{\cN}^\dagger=J_{\cN}$
\footnote{We refer the interested readers to~\cite[Section 2.2]{watrous2018theory} and
the references therein for a detailed proof of these results.}.
Given the Choi operator $J_\cN$, the output of $\cN$ can be reconstructed
via
\begin{align}\label{eq:Choi-output}
    \cN(\rho) = \tr_{A'}[{(\rho^T\otimes I_A)}J_\cN],
\end{align}
where the transpose $T$ is with respect to the orthonormal basis defining $\ket{\Gamma}_{A'A}$.

We use $\bR$ to represent the real field and $\bC$ to represent the complex field.

\subsection{Invertible CPTP maps}

In this subsection, we formally define  the concept of invertible
CPTP maps~\cite{nayak2006invertible,shirokov2013reversibility,crowder2015linearization} and explore their properties.  
Invertible CPTP maps motivate our study of physical implementbility and will be investigated in detail  in Section~\ref{sec:Analytic expression}. Readers who are familiar with invertible maps may safely skip this section and come back when necessary.

Let $\cO_{A\to A}$ be a CPTP map in the system $A$. We say $\cO$ is \emph{invertible},
if there exists a linear map $\cN_{A\to A}$ in the system $A$ such that
\begin{align}\label{equ:inverse}
    \forall X\in\linear{\cH_A},\; \cN\circ\cO(X) = X.
\end{align}
{In this manuscript we only consider system $A$ of finite dimension, where the range of an invertible map $\mathcal{O}$ is the whole space, that is $\{\cO(X)\; |\; X\in \linear{\cH_A}\}=\linear{\cH_A}.$ }
That is, $\cN$ cancels the effect of $\cO$ and returns back the input state.
In this case, we adopt the convention $\cO^{-1}\equiv\cN$ and call
$\cO^{-1}$ the inverse map of $\cO$.
What's more, we say $\cO$ is \emph{strictly invertible}, if in addition the quantum
map $\cN$ is CPTP.  {Wigner's theorem}~\cite{wigner1960group} guarantees that
a CPTP map $\cO_{A\to A}$ is strictly invertible if and only if it is a unitary channel,
i.e., $\cO(\cdot)=U(\cdot)U^\dagger$ for some unitary $U$. On the other hand,
the qudit depolarizing channel
is invertible but not strictly invertible since its inverse linear map is not completely positive
(cf. Lemma~\ref{lemma:depolarizing}).
We emphasize that not all CPTP maps are invertible;
the constant quantum channel~\cite{cooney2016strong,berta2018amortized,Fang2018,Wang2017a,takagi2020application},
which maps all inputs into some fixed quantum state, is a prominent counterexample.

In the following, we show that for the system $A$ of finite dimension, if a CPTP map $\cO_{A\to A}$ \emph{is} invertible,
its inverse map $\cO^{-1}$ must necessarily be both 
Hermitian- and trace-preserving (HPTP).

\begin{property}\label{pro:hermitian}
Let $\cO_{A\to A}$ be an invertible CPTP map. 
The following statements hold:
\begin{enumerate}
  \item $\cO^{-1}$ is Hermitian-preserving; and
  \item $\cO^{-1}$ is trace-preserving.
\end{enumerate}
\end{property}
\begin{proof}
First notice that for arbitrary $X\in\linear{\cH_A}$,
\begin{align}\label{eq:gLVWPlFmP}
\cO^{-1}\circ\cO(\rho) = \rho
\Leftrightarrow \cO\circ\cO^{-1}\circ\cO(\rho) = \cO(\rho)
\Leftrightarrow \cO\circ\cO^{-1}\left(\cO(\rho)\right) = \cO(\rho)
\Rightarrow\cO\circ\cO^{-1} = \id_{\cO},
\end{align}
where $\id_{\cO}$ is the identity map on the range of $\cO$.

Now we show the Hermitian-preserving property. 
Assume the system $A$ is $d$-dimensional. Then $\linear{\cH_A}$ is a $d^2$-dimensional linear space over the complex field $\bC$ and $\herm{\cH_A}$ is a $d^2$-dimensional linear space over the real field $\bR$.
As so, there exist linearly independent operators $O_1,...,O_{d^2}\in \herm{\cH_A}$\footnote{For example, write ${E_{ij}\in\bR^{d\times d}}$ as the matrix that $E_{ij}[k,l]$ equals $1$ for $k,l=i,j$ and $0$ otherwise. We can choose the basis as $\{E_{kk}\}_{k\in[d]}; \{E_{kl}-E_{lk}\}_{ k<l; k,l\in[d]}; \{iE_{kl}-iE_{lk}\}_{ k<l; k,l\in[d]}$.} such that for every $O\in \herm{\cH_A}$,
\begin{align}
    O=\sum_{i=1}^{d^2} c_i O_i, c_i\in  \bR.
\end{align}
Since $\cO$ is a quantum channel and Hermitian-preserving, we have $\cO(O_i)\in \herm{H_A}$.
Since $\cO$ is linear and invertible, $\cO(O_1),...,\cO(O_{d^2})$ must be linearly independent over the field $\bR$. Otherwise, there exists $O\neq\bm 0$ for which $\cO(O)=\bm 0$, implying that $\cO$ is not invertible since by linearity we already have {$\cO^{-1}(\bm 0)=\bm 0$}. To conclude, $\{\cO(O_1),...,\cO(O_{d^2})\}$ is a linear independent set in $\herm{H_A}$ over the field $\bR$
of size $d^2$, thus it forms a basis for $\herm{\cH_A}$. Then for any $O\in \herm{\cH_A}$, there exists $\{a_i\in \bR\}_i$ such that
\begin{align}
    O&=\sum_{i=1}^{d^2} a_i\cO(O_i),\\
    &=\cO\left(\sum_{i=1}^{d^2} a_iO_i\right),
\end{align}
where the first equality follows since $\{\cO(O_i)\}_i$ forms a basis, and the second equality follows since $\cO$ is linear. Now it holds that for any $O\in \herm{\cH_A}$,
\begin{align}
    \cO^{-1}(O)
= \cO^{-1}\circ\cO\left(\sum_{i=1}^{d^2} a_iO_i\right)
= \sum_{i=1}^{d^2} a_iO_i\in\herm{\cH_A},
\end{align}
implying that $\cO^{-1}$ is Hermitian-preserving.

The trace-preserving property is easy to check. {Notice that in this work we only consider system $A$ of finite dimension, where the range of an invertible map $\mathcal{O}$ is the whole space, i.e., $\{\cO(X)\; |\; X\in \linear{\cH_A}\}=\linear{\cH_A}$.}
For arbitrary $X\in\linear{\cH_A}$, we have
\begin{align}
    \tr\left[\cO^{-1}(X)\right]
=   \tr\left[\cO\circ\cO^{-1}(X)\right]
=   \tr\left[X\right],
\end{align}
where the first equality follows from trace-preserving property of $\cO$
and the second equality follows from the definition of inverse map.
\end{proof}

\section{Physical implementability of linear maps}\label{sec:physical implementability}

In this section, we introduce the physical implementability measure to characterize
how well a linear map can be physically approximated.  We investigate its various properties,
derive bounds in terms of its Choi operator's trace norm, and deliver analysis for particular examples of interest.

\subsection{Definition}

Completely positive and trace-preserving (CPTP)  maps mathematically characterize physically
implementable quantum operations in a given quantum system.
Nevertheless, general linear maps, such as positive but not completely positive maps,
which are impossible to be physically implemented, are also fundamental ingredients
from theoretical and practical perspectives. For example, in the error mitigation task, one might wish to implement the inverse map of the noise, which may not be completely positive thus is non-physical.
It leads naturally to the problem of physically approximating these `non-physical' linear maps. To this end, we introduce the physical implementability measure
to characterize that how well a given quantum linear map can be \JJQ{physically} implemented. 
Here we interpret ``physically approximating a linear map"  as decomposing the given map into linear combination of CPTPs, and quantify the hardness of approximation by the $l_1$-norm of the decomposition coefficients. This interpretation is inspired by the error mitigation task and \JJQ{its} operational meaning will be further discussed in Section \ref{sec:error mitigation}. 

Formally, given an HPTP map $\cN$ in $\cH$, the \emph{physical implementability} of $\cN$ is defined as
\begin{align}\label{eq:implementability}
    \ce{\cN}:=\log\min
    \left\{ \sum_{\alpha} | \eta_{\alpha} |
        \sbar \cN= \sum_{\alpha}  \eta_{\alpha} \cO_{\alpha},\;
        \cO_{\alpha} \text{~is CPTP},\; \eta_{\alpha}\in \bR
    \right\},
\end{align}
where logarithms are in base $2$ throughout this paper.
Intuitively, we decompose the map $\cN$
as a linear combination of CPTP maps that are physically implementable in a quantum system;
negative terms, that is $\eta_\alpha<0$, must be introduced and they quantify
the fundamental limit of physical implementability of $\cN$.
In the following Lemma~\ref{lem:N2} we show 
that $\cN$ can always be decomposed into a 
linear combination of two quantum channels with carefully chosen coefficients.  Lemma \ref{lem:N2} and SDP (\ref{eq:r-SDP-Prime}) show that the  value of (\ref{eq:implementability}) is  the minimization of a convex function over a closed convex set, thus \JJQ{we can} write $\min$ instead of $\inf$ in (\ref{eq:implementability}). 
Interestingly, the physical implementability measure finds its operational meaning
in error mitigation tasks, as we will argue in the next section.

The measure $\nu$ bears nice properties. First of all, it is efficiently computable via
semidefinite programs.
It satisfies the desirable additivity property with respect to tensor products.
This property ensures that the parallel application of linear maps cannot
make the physical implementation `easier' compared to implementing these linear maps individually.
It also satisfies the monotonicity property with respect to quantum superchannels.
We give upper and lower bounds on the physical implementability in terms of its Choi operator's trace norm,
connecting this measure to the widely studied operator norm in the literature.
These bounds are tight in the sense that there exist HPTP maps for
which the bounds are saturated. What's more, we are able to
derive analytical expressions of this measure for some inverse maps of practically interesting CPTP maps.

\subsection{Semidefinite programs}

In this section, we propose a semidefinite program (SDP) calculating $\ce{\cN}$ for arbitrary HPTP maps $\cN$. We remind that under mild regularity assumptions, SDP 
can be efficiently solved by interior point method~\cite{boyd2004convex} with runtime polynomial
in $d$, where $d$ is the dimension of the quantum system $\cH$.
This algorithm is designed by noticing that the optimal
decomposition~\eqref{eq:implementability} can always be obtained by just two quantum channels.

In the following, we prove that HPTP maps can always be decomposed into a linear combination of
two quantum channels with carefully chosen coefficients. It is worth noting
that, the construction in Lemma \ref{lem:N2} does not yield the optimal physical
implementability.
However, in Theorem \ref{thm:N2} we show the optimality can be obtained by another two CPTP maps.

\begin{lemma}\label{lem:N2}
  Let $\cN$ be an HPTP map, then there exist two real numbers $\eta_1\geq 0,\eta_2\geq 0$
  and two CPTP maps $\cO_1,\cO_2$ such that
  \begin{align}
    \cN=\eta_1\cO_1-\eta_2\cO_2.
  \end{align}
\end{lemma}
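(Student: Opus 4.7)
The plan is to exhibit an explicit decomposition of the stated form by ``lifting'' $\cN$ with a sufficiently large multiple of some convenient reference CPTP map until the sum becomes completely positive; the subtracted multiple then contributes the negative coefficient. Pick a fixed CPTP reference, say the completely depolarizing channel $\cR(\rho)=\tr(\rho)\,I/d$, whose Choi operator is $J_\cR=I_{A'A}/d>0$. For any real $t\geq 0$, define
\begin{align}
  \cO_1 := \frac{\cN + t\cR}{1+t}, \qquad \cO_2 := \cR, \qquad \eta_1 := 1+t, \qquad \eta_2 := t,
\end{align}
so that by construction $\eta_1\cO_1-\eta_2\cO_2=\cN$.

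The task then reduces to choosing $t$ large enough so that $\cO_1$ is CPTP (since $\cO_2=\cR$ is CPTP by choice). First I would verify trace-preservation: $\cN$ and $\cR$ are both TP, hence $\cN+t\cR$ scales the trace by $1+t$, so dividing by $1+t$ yields a TP map. Next, using the Choi--Jamio\l{}kowski isomorphism from the preliminaries, $\cO_1$ is CP iff its Choi operator $J_{\cO_1}=(J_\cN+tJ_\cR)/(1+t)=(J_\cN+(t/d)I_{A'A})/(1+t)$ is positive semidefinite. Since $J_\cN$ is Hermitian (as $\cN$ is HP), this is equivalent to $t/d\geq -\lambda_{\min}(J_\cN)$, which is achievable for any finite-dimensional HPTP $\cN$ by taking $t:=d\max\{0,-\lambda_{\min}(J_\cN)\}$. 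With this choice $\cO_1$ is CPTP, $\cO_2$ is CPTP, and $\eta_1,\eta_2\geq 0$, yielding exactly the claimed decomposition.

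There is no serious obstacle; the only point that needs care is that such a finite $t$ exists, which holds because $J_\cN$ acts on a finite-dimensional space and therefore has a finite spectrum bounded below. I would also note, for context and for the optimization discussion that follows, that this construction is \emph{not} claimed to be optimal: it merely shows feasibility of a two-channel decomposition, thereby justifying the use of $\min$ rather than $\inf$ in the definition of $\nu(\cN)$ and paving the way for the SDP formulation and the optimal two-channel decomposition promised in Theorem \ref{thm:N2}.
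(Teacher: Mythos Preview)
Your proof is correct and follows essentially the same idea as the paper's: mix $\cN$ with a large enough multiple of the completely depolarizing channel so that the Choi operator becomes positive semidefinite, using finiteness of the spectrum of the Hermitian $J_\cN$. The only cosmetic difference is that the paper fixes $\cO_1$ to be the depolarizing channel and defines $\cO_2$ as the (normalized) remainder $(\eta_1 J_1 - J_\cN)/\eta_2$, whereas you fix $\cO_2=\cR$ and absorb the mixing into $\cO_1$; the two are the same construction with the roles of the two channels swapped.
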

\begin{proof}
By the Choi-Jamio\l{}kowski isomorphism, it's equivalent to prove
that there exist two real numbers $\eta_1$, $\eta_2$,
and two positive semidefinite operators $J_1$ and $J_2$, such that $J_i\geq 0$,
$\tr_B J_i = \1_A$, $i=1,2$, and
\begin{align}\label{eq:temp}
J_\cN=\eta_1 J_1-\eta_2 J_2.
\end{align}
From solution of \eqref{eq:temp}
one can construct desired $\cO_1,\cO_2$ via  \eqref{eq:Choi-output}.

Write $d_A,d_{B}$ as the dimensions of system $A,B$ respectively. Recall that $\tr_B J_\cN= \1_A$, it suffices to let
\begin{align}
  & \eta_1=(\norm{J_\cN}{1}+1)\cdot d_{B}, J_1=I_{AB}/d_{B}, \\
  & \eta_2=(\norm{J_\cN}{1}+1)\cdot d_B-1, J_2= (\eta_1J_1-J_\cN)/\eta_2
  \end{align}
\end{proof}

\begin{theorem}\label{thm:N2}
Let $\cN$ be an HPTP map. It holds that
\begin{align}\label{eq:N2}
   \ce{\cN}=\log\min\left\{ \eta_1+\eta_2  \sbar \cN=\eta_1\cO_1-\eta_2\cO_2;\;
              \eta_i\geq 0,\; \cO_i\rm{~is~CPTP}\right\}.
\end{align}
\end{theorem}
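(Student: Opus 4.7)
The plan is to prove the two inequalities between \eqref{eq:implementability} and \eqref{eq:N2} separately. The direction $\nu(\cN) \leq \log \min\{\eta_1+\eta_2 : \cN = \eta_1\cO_1 - \eta_2\cO_2,\, \eta_i\geq 0,\, \cO_i \text{ CPTP}\}$ is immediate: any feasible pair $(\eta_1,\eta_2,\cO_1,\cO_2)$ for \eqref{eq:N2} yields a feasible quasiprobability decomposition $\cN = \eta_1 \cO_1 + (-\eta_2)\cO_2$ for \eqref{eq:implementability}, whose $\ell_1$-cost equals $\eta_1 + \eta_2$. Lemma~\ref{lem:N2} guarantees that the feasible set of \eqref{eq:N2} is non-empty, so this upper bound is well-defined.

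For the reverse direction, I would take an arbitrary feasible decomposition $\cN = \sum_\alpha \eta_\alpha \cO_\alpha$ in \eqref{eq:implementability} and collapse it into a two-channel decomposition of no greater cost. Partition the indices according to sign, let
\begin{align}
    A_+ := \{\alpha : \eta_\alpha \geq 0\}, \quad A_- := \{\alpha : \eta_\alpha < 0\},
\end{align}
and set $\eta_+ := \sum_{\alpha \in A_+} \eta_\alpha$ and $\eta_- := \sum_{\alpha \in A_-} |\eta_\alpha|$, so that $\eta_+ + \eta_- = \sum_\alpha |\eta_\alpha|$. The trace-preserving property of $\cN$ forces $\sum_\alpha \eta_\alpha = 1$, hence $\eta_+ - \eta_- = 1$ and in particular $\eta_+ \geq 1 > 0$. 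When $\eta_- > 0$, I would define the convex combinations
\begin{align}
    \cO_1 := \frac{1}{\eta_+} \sum_{\alpha \in A_+} \eta_\alpha \cO_\alpha, \qquad
    \cO_2 := \frac{1}{\eta_-} \sum_{\alpha \in A_-} |\eta_\alpha|\, \cO_\alpha,
\end{align}
which remain CPTP since the set of CPTP maps is convex, and obtain the two-channel decomposition $\cN = \eta_+ \cO_1 - \eta_- \cO_2$ with cost $\eta_+ + \eta_- = \sum_\alpha |\eta_\alpha|$. When $\eta_- = 0$, then $\cN$ is already a convex combination of CPTP maps (with weights summing to $1$), hence CPTP itself, and one may write $\cN = 1\cdot \cN - 0\cdot \cO'$ for any fixed CPTP $\cO'$, again matching the cost.

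Taking the infimum over all feasible decompositions in \eqref{eq:implementability} then yields $\log \min\{\eta_1+\eta_2 : \cdots\} \leq \nu(\cN)$, completing the proof. The main (mild) obstacle is ensuring the construction is valid for every case, which is why I would invoke the trace-preserving constraint to rule out $\eta_+ = 0$ and treat the $\eta_- = 0$ case separately; the convex-combination step itself is routine once the sign partition is in place, and Lemma~\ref{lem:N2} together with the SDP characterization referenced after \eqref{eq:implementability} guarantees that the minima in both \eqref{eq:implementability} and \eqref{eq:N2} are actually attained.
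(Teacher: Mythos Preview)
Your proposal is correct and follows essentially the same approach as the paper: both arguments collapse an arbitrary quasiprobability decomposition into a two-channel one by grouping the terms according to the sign of their coefficients and forming convex combinations within each group. Your treatment is slightly more careful than the paper's, as you explicitly use the trace-preserving condition to guarantee $\eta_+>0$ and handle the degenerate case $\eta_-=0$ separately.
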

\begin{proof}
From Lemma \ref{lem:N2} we know $\cN$ can always be written as linear combination of two CPTPs.
This implies that $\ce{\cN}$ is finite. We prove
that $\ce{\cN}$
can always be obtained by a linear combination of two CPTPs. Suppose on the contrary
that $\ce{\cN}$ is achieved by
\begin{align}\label{eq:whtJbaZQN}
\cN &= \sum_{\alpha\in\cK}\eta_\alpha' \cO_\alpha',\;
            \eta_\alpha'\in\cR,\; \cO_\alpha'~~\text{is CPTP},
\end{align}
where $\vert\cK\vert\geq3$ and might be infinite (in which case the sum shall be replaced by the
integral in~\eqref{eq:whtJbaZQN}). That is, $\ce{\cN}=\sum_{\alpha\in\cK} |\eta_{\alpha}'|$.

Noticing the set of CPTPs is convex, we divide the set of quantum channels
in~\eqref{eq:whtJbaZQN} into two subgroups according to the sign of their coefficients.
More precisely, set
\begin{align}
\eta_1 &:= \sum_{\alpha: \eta_\alpha'\geq 0}\eta_\alpha',\\
\cO_1  &:= {\eta_1}\sum_{\alpha:\eta_\alpha'\geq 0} \frac{\eta_\alpha'}{\eta_1} \cO_\alpha',\\
\eta_2 &:= \sum_{\alpha: \eta_\alpha'<0}|\eta_\alpha'|,\\
\cO_2  &:= {\eta_2}\sum_{\alpha:\eta_\alpha'< 0} \frac{|\eta_\alpha'|}{\eta_2} \cO_\alpha',
\end{align}
we can easily check from~\eqref{eq:whtJbaZQN} that
\begin{align}
    \cN = \eta_1\cN_1 - \eta_2\cN_2.
\end{align}
What's more, the above decomposition yields
\begin{align}
  \eta_1 + \eta_2 = \sum_{\alpha'}|\eta_{\alpha}'| = \ce{\cN}.
\end{align}
\end{proof}

\begin{remark}
Note that a similar conclusion to Theorem~\ref{thm:N2} has been obtained in
\cite{takagi2020optimal} . The proof here is  similar.
\end{remark}

\vb

An advantage of Theorem~\ref{thm:N2} is that it leads to a semidefinite program
characterization of the measure $\ce{\cN}$ (more concretely, $2^{\ce{\cN}}$),
in terms of its the Choi operator $J_\cN$:
\begin{subequations}\label{eq:r-SDP-Prime}
\begin{align}
\text{\bf Primal:}\quad
   2^{\ce{\cN}} =     \min&\; p_1+p_2 \\
         \text{s.t.}&\; J_{\cN} = J_1-J_2\label{eq:r-SDP-Prime-1} \\
                    &\; \tr_B J_1 = p_1\1_A\label{eq:r-SDP-Prime-2} \\
                    &\; \tr_B J_2 = p_2\1_A\label{eq:r-SDP-Prime-3} \\
                    &\; J_1 \geq 0, J_2 \geq 0
\end{align}
\end{subequations}
Throughout this paper, `s.t.' is short for `subject to'.
 Correspondingly, the dual SDP is given by
(the proof can be found in Appendix~\ref{appx:proof-of-the-dual})
\begin{subequations}\label{eq:r-SDP-Dual}
\begin{align}
\text{\bf Dual:}\quad
   2^{\ce{\cN}} = \max &\; \tr\left[M_{AB} J_{\cN}\right] \\
            \text{s.t.}&\; \tr N_A = 1 \\
                         &\; \tr K_A = 1 \\
                       &\; M_{AB} + N_A\otimes \1_B \geq 0 \\
                       &\; - M_{AB} + K_A\otimes \1_B \geq 0
\end{align}
\end{subequations}
where the maximization ranges over all Hermitian operators $M_{AB}$ 
in the joint system $AB$,
and Hermitian operators $N_A, K_A$ in the system $A$. One can check that the primal SDP satisfies the Slater condition, thus the strong duality holds.
These primal and dual semidefinite programs are
useful in proving the properties of $\ce{\cN}$.

Besides, we show that relaxing the condition in~\eqref{eq:N2} from CPTPs to
CPTN (completely positive and trace non-increasing) maps leads to the same $\ce{\cN}$.

\begin{lemma}\label{lem:N2-sub}
Let $\cN$ be an HPTP map. It holds that
\begin{align}\label{eq:N2-sub}
   \ce{\cN}=\log\min\left\{ \eta_1+\eta_2  \sbar \cN=\eta_1\cO_1-\eta_2\cO_2;\;
              \eta_i\geq 0,\; \cO_i \rm{~is~CPTN}\right\}.
\end{align}
\end{lemma}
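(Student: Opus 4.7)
The plan is to prove $\mu(\cN) = \ce{\cN}$, where $\mu(\cN)$ denotes the right-hand side of \eqref{eq:N2-sub}. The easy inequality $\mu(\cN) \leq \ce{\cN}$ is immediate, since every CPTP map is CPTN and hence every decomposition feasible in \eqref{eq:N2} is feasible in \eqref{eq:N2-sub}. The substance is to show $\ce{\cN} \leq \mu(\cN)$ by upgrading any CPTN decomposition to a CPTP decomposition of no larger total weight.

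The first step is a normalization. Given $\cN = \eta_1 \cO_1 - \eta_2 \cO_2$ with $\eta_i \geq 0$ and $\cO_i$ CPTN, let $J_i := J_{\cO_i}$ and $\lambda_i := \norm{\tr_B J_i}{\infty} \in [0,1]$. When $\lambda_i > 0$, rescale $(\cO_i, \eta_i) \mapsto (\cO_i/\lambda_i,\, \eta_i \lambda_i)$: the rescaled $\cO_i$ is still CPTN because $\norm{\tr_B(J_i/\lambda_i)}{\infty} = 1$ ensures $\tr_B(J_i/\lambda_i) \leq \1_A$, the identity $\cN = \eta_1 \cO_1 - \eta_2 \cO_2$ is preserved, and the total weight can only decrease (since $\lambda_i \leq 1$). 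The degenerate case $\lambda_i = 0$ forces $J_i = 0$ and is handled separately by observing that $\cN$ then reduces, via trace-preservation, to a CPTP map directly. After rescaling, $P_i := \tr_B J_i \geq 0$ satisfies $\norm{P_i}{\infty} = 1$.

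The second step uses the trace-preserving constraint. Because $\tr_B J_\cN = \1_A$, the decomposition gives $\eta_1 P_1 = \1_A + \eta_2 P_2$. Taking operator norms and invoking the elementary spectral identity $\norm{\1_A + X}{\infty} = 1 + \norm{X}{\infty}$ valid for $X \geq 0$, together with $\norm{P_i}{\infty} = 1$, forces $\eta_1 = 1 + \eta_2$, i.e.\ $\eta_1 - \eta_2 = 1$.

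The third step completes each CPTN to a CPTP by appending a state-preparation correction. Fix any state $\sigma_B$ (for concreteness, $\sigma_B = \1_B/d_B$) and set $\tilde J_i := J_i + (\1_A - P_i) \otimes \sigma_B$. Each $\tilde J_i$ is positive semidefinite, and $\tr_B \tilde J_i = P_i + (\1_A - P_i) = \1_A$, so $\tilde J_i$ is the Choi operator of a CPTP map $\tilde \cO_i$. A direct computation yields
\begin{align*}
\eta_1 \tilde J_1 - \eta_2 \tilde J_2
= J_\cN + \bigl((\eta_1 - \eta_2)\1_A - (\eta_1 P_1 - \eta_2 P_2)\bigr) \otimes \sigma_B
= J_\cN + (\eta_1 - \eta_2 - 1)\,\1_A \otimes \sigma_B
= J_\cN,
\end{align*}
where the last equality uses $\eta_1 - \eta_2 = 1$. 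Hence $\cN = \eta_1 \tilde \cO_1 - \eta_2 \tilde \cO_2$ is a CPTP decomposition of total weight $\eta_1 + \eta_2$, no larger than the original, establishing $\ce{\cN} \leq \mu(\cN)$. The only real subtlety is the coupling between the rescaling step and the TP constraint: rescaling to $\norm{P_i}{\infty} = 1$ is precisely what forces $\eta_1 - \eta_2 = 1$, and that identity is exactly the condition that makes the state-preparation correction term cancel without inflating the weight, so the CPTN relaxation provides no genuine savings.
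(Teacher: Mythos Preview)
Your proof is correct but follows a different construction from the paper's. The paper works directly in the scaled SDP variables (where $\tr_B J_i \leq p_i \1_A$) and adds the \emph{same} correction term
\[
\left[\tfrac{p_1+p_2}{2}\,\1_A - \tfrac{\tr_B J_1 + \tr_B J_2}{2}\right]\otimes \1_B/d_B
\]
to both $J_1$ and $J_2$; because the correction is identical, the difference $J_1 - J_2 = J_\cN$ is preserved automatically, and a short computation shows the resulting partial traces equal $\tfrac{p_1+p_2\pm 1}{2}\,\1_A$, yielding CPTP Choi operators whose weights still sum to $p_1+p_2$. Your route instead first rescales each CPTN factor so that $\norm{\tr_B J_i}{\infty} = 1$, then uses the spectral identity $\norm{\1_A + X}{\infty} = 1 + \norm{X}{\infty}$ for $X \geq 0$ together with trace-preservation to force $\eta_1 - \eta_2 = 1$; this identity is exactly what makes your \emph{asymmetric} per-factor completions $(\1_A - P_i)\otimes\sigma_B$ cancel in the difference. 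The paper's argument is shorter and more direct (no rescaling, no spectral-norm step, and the cancellation is trivial), while yours isolates the structural reason the CPTN relaxation gains nothing: after normalizing the subchannels, the TP constraint already pins $\eta_1 - \eta_2 = 1$, so the weight profile is rigid and the completion comes for free.
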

\begin{proof}
Define the following optimization problem:
\begin{subequations}\label{eq:r-SDP-Prime-alt}
\begin{align}
   2^{\omega(\cN)} =     \min&\; p_1+p_2 \\
         \text{s.t.}&\; J_{\cN} = J_1-J_2\label{eq:r-SDP-Prime-1-alt} \\
                    &\; \tr_B J_1 \leq p_1\1_A\label{eq:r-SDP-Prime-2-alt} \\
                    &\; \tr_B J_2 \leq p_2\1_A\label{eq:r-SDP-Prime-3-alt} \\
                    &\; J_1 \geq 0, J_2 \geq 0
\end{align}
\end{subequations}
It is equivalent to show that $\ce{\cN} = \omega(\cN)$. Note that the only difference
between the programs~\eqref{eq:r-SDP-Prime} and~\eqref{eq:r-SDP-Prime-alt} lies in that
we relax the equality conditions in~\eqref{eq:r-SDP-Prime-2} and~\eqref{eq:r-SDP-Prime-3}
to inequality conditions.

By definition it holds that $\ce{\cN} \geq \omega(\cN)$ since the constraints are relaxed.
We prove $\ce{\cN} \leq \omega(\cN)$ by showing that we are able to construct
a feasible solution of~\eqref{eq:r-SDP-Prime} from
any feasible solution of~\eqref{eq:r-SDP-Prime-alt} with the same performance.
Assume $(p_1,p_2,J_1,J_2)$ achieves~\eqref{eq:r-SDP-Prime-alt}. That is,
\begin{align}\label{eq:nQiIEwIFcSHR}
2^{\omega(\cN)} = p_1 + p_2, \quad
J_{\cN} = J_1 - J_2,\quad
\tr_BJ_1 \leq p_1\1_A,\quad
\tr_BJ_2 \leq p_2\1_A.
\end{align}
Define the following operators
\begin{align}
\wt{J}_1 &:= J_1 + \left[\frac{p_1+p_2}{2}\1_A - \frac{\tr_BJ_1+\tr_BJ_2}{2}\right]\ox\1_B/d_B, \\
\wt{J}_2 &:= J_2 + \left[\frac{p_1+p_2}{2}\1_A - \frac{\tr_BJ_1+\tr_BJ_2}{2}\right]\ox\1_B/d_B.
\end{align}
By~\eqref{eq:nQiIEwIFcSHR},
we have $\wt{J}_1,\wt{J}_2\geq0$ and $\wt{J}_1 - \wt{J}_2 = J_1 - J_2 = J_{\cN}$.
What's more,
\begin{align}
    \tr_B\wt{J}_1
&= \tr_B J_1 + \frac{p_1+p_2}{2}\1_A - \frac{\tr_BJ_1+\tr_BJ_2}{2} \\
&= \frac{p_1+p_2}{2}\1_A + \frac{\tr_BJ_1-\tr_BJ_2}{2} \\
&= \frac{p_1+p_2+1}{2}\1_A, \\
    \tr_B\wt{J}_2
&= \tr_B J_2 + \frac{p_1+p_2}{2}\1_A - \frac{\tr_BJ_1+\tr_BJ_2}{2} \\
&= \frac{p_1+p_2}{2}\1_A - \frac{\tr_BJ_1-\tr_BJ_2}{2} \\
&= \frac{p_1+p_2-1}{2}\1_A,
\end{align}
where we used the fact that $\tr_BJ_1-\tr_BJ_2 = \tr_BJ_{\cN} = \1_A$.
Since $p_1+p_2\geq1$ by definition, $\wt{J}_2$ is well defined.
That is to say,
$((p_1+p_2+1)/2,(p_1+p_2-1)/2,\wt{J}_1,\wt{J}_2)$ is a feasible solution for~\eqref{eq:r-SDP-Prime}
and thus
\begin{align}
  2^{\ce{\cN}} \leq \frac{p_1+p_2+1}{2} + \frac{p_1+p_2-1}{2} = p_1+p_2 = 2^{\omega(\cN)}.
\end{align}
\end{proof}

\subsection{Properties}

In this section, we prove several interesting properties of $\ce{\cN}$
-- faithfulness, additivity w.r.t. (with respect to) tensor product of linear maps,
subadditivity w.r.t. linear map composition, unitary channel invariance,
and monotonicity under quantum superchannels -- via exploring its semidefinite program characterizations.

The faithfulness property ensures
a linear map possesses zero physical implementability measure
if and only if it is physically implementable, i.e., it is a CPTP map.
As so, positive physical implementability measure indicates that
the corresponding map is not physically implementable.

\begin{lemma}[Faithfulness]\label{lemma:faithfulness}
Let $\cN_{A\to A}$ be an HPTP map. $\ce{\cN}=0$ if and only if $\cN$ is CPTP.
\end{lemma}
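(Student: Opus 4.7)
The plan is to prove both directions using Theorem~\ref{thm:N2}, which reduces the defining infimum to a decomposition $\cN = \eta_1\cO_1 - \eta_2\cO_2$ with $\eta_1,\eta_2\geq 0$ and $\cO_1,\cO_2$ CPTP, and to exploit trace-preservation to pin down the coefficients.

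For the easy direction, if $\cN$ is CPTP, then the trivial decomposition $\cN = 1\cdot\cN$ yields $\sum_\alpha |\eta_\alpha| = 1$, so $\nu(\cN)\leq \log 1 = 0$. For the reverse inequality I first establish that $\nu(\cN)\geq 0$ for every HPTP map. The key observation is that any decomposition $\cN = \sum_\alpha \eta_\alpha \cO_\alpha$ into CPTP maps must satisfy $\sum_\alpha \eta_\alpha = 1$: applying both sides to an arbitrary state $\rho$ and taking the trace gives $1 = \tr[\cN(\rho)] = \sum_\alpha \eta_\alpha \tr[\cO_\alpha(\rho)] = \sum_\alpha \eta_\alpha$, where I used that $\cN$ and each $\cO_\alpha$ is trace-preserving. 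Hence $\sum_\alpha |\eta_\alpha|\geq |\sum_\alpha \eta_\alpha| = 1$, and taking the logarithm gives $\nu(\cN)\geq 0$. In particular, for CPTP $\cN$ this combines with the upper bound to give $\nu(\cN)=0$.

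For the harder direction, assume $\nu(\cN) = 0$. By Theorem~\ref{thm:N2} the infimum is attained and there exist $\eta_1,\eta_2\geq 0$ and CPTP maps $\cO_1,\cO_2$ with
\begin{align}
\cN = \eta_1\cO_1 - \eta_2\cO_2,\quad \eta_1 + \eta_2 = 2^{\nu(\cN)} = 1.
\end{align}
Repeating the trace argument with this two-term decomposition yields $\eta_1 - \eta_2 = 1$. Solving the linear system gives $\eta_1 = 1$ and $\eta_2 = 0$, whence $\cN = \cO_1$ is CPTP.

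I expect no serious obstacle here; the only subtlety is ensuring that the minimum in \eqref{eq:implementability} is actually attained so that the argument in the $\nu(\cN)=0$ case applies to a genuine decomposition rather than a limit, but this is exactly what Lemma~\ref{lem:N2} combined with Theorem~\ref{thm:N2} and the SDP formulation \eqref{eq:r-SDP-Prime} guarantees. The proof is essentially a one-line trace calculation once the two-term reduction is in hand.
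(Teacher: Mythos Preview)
Your proof is correct and follows essentially the same approach as the paper: both invoke Theorem~\ref{thm:N2} to reduce to a two-term decomposition, then use trace-preservation to deduce $\eta_1-\eta_2=1$ alongside $\eta_1+\eta_2=1$, forcing $\eta_2=0$. Your write-up is slightly more explicit in justifying the lower bound $\nu(\cN)\geq 0$ for all HPTP maps, which the paper leaves implicit in the phrase ``follows directly by definition.''
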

\begin{proof}
The `if' part follows directly by definition.
To show the `only if' part, recall
the alternative characterization of $\ce{\cdot}$ in Theorem~\ref{thm:N2}.
Since $\ce{\cN}=0$, there must exist a channel ensemble $\{(\eta_1,\cO_1),(\eta_2,\cO_2)\}$ such that
$\cN = \eta_1\cO_1 - \eta_2\cO_2$, $\eta_1,\eta_2\geq0$, and
$\eta_1+\eta_2=1$. Since $\cN$ is trace-preserving,
it holds that $\eta_1-\eta_2=1$.
These conditions yield $\eta_1=1, \eta_2=0$ 
and thus $\cN=\cO_1$, implying that $\cN$ is completely positive.
\end{proof}

\begin{theorem}[Additivity w.r.t. tensor product]\label{thm:additivity}
Let $\cM_{A_1\to B_1}$ and $\cN_{A_2\to B_2}$ be two HPTP maps. It holds that
\begin{align}
  \ce{\cM\ox\cN} = \ce{\cM} + \ce{\cN}.
\end{align}
\end{theorem}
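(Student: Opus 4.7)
The plan is to prove the additivity by establishing both inequalities separately.

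For the upper bound $\ce{\cM\ox\cN} \leq \ce{\cM}+\ce{\cN}$, I would invoke Theorem~\ref{thm:N2} to express the optima as two-channel decompositions $\cM = \eta_1 \cO_1 - \eta_2 \cO_2$ and $\cN = \xi_1 \cP_1 - \xi_2 \cP_2$, with $\eta_1+\eta_2 = 2^{\ce{\cM}}$ and $\xi_1+\xi_2 = 2^{\ce{\cN}}$. Distributing the tensor product yields
\begin{align*}
\cM\ox\cN
= \eta_1\xi_1\, \cO_1\ox\cP_1 + \eta_2\xi_2\, \cO_2\ox\cP_2
- \eta_1\xi_2\, \cO_1\ox\cP_2 - \eta_2\xi_1\, \cO_2\ox\cP_1,
\end{align*}
a decomposition into four CPTP maps whose coefficient $\ell_1$-norm is $(\eta_1+\eta_2)(\xi_1+\xi_2) = 2^{\ce{\cM}+\ce{\cN}}$. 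Plugging into~\eqref{eq:implementability} delivers the upper bound.

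For the harder reverse direction $\ce{\cM\ox\cN} \geq \ce{\cM}+\ce{\cN}$, the plan is to exploit the dual SDP~\eqref{eq:r-SDP-Dual}. Let $(M_1,N_1,K_1)$ and $(M_2,N_2,K_2)$ be dual-optimal Hermitian triples certifying $\tr[M_i J_{\cM_i}] = 2^{\ce{\cM_i}}$ for $i=1,2$, with $\cM_1 = \cM, \cM_2 = \cN$. The natural product witness is $M_{12} := M_1\ox M_2$; since $J_{\cM\ox\cN}$ coincides (after system reordering) with $J_\cM\ox J_\cN$, this already matches the target objective $\tr[M_{12} J_{\cM\ox\cN}] = \tr[M_1 J_\cM]\cdot\tr[M_2 J_\cN] = 2^{\ce{\cM}+\ce{\cN}}$. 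The task then reduces to exhibiting Hermitian $N_{12}, K_{12}$ on $A_1 A_2$ with $\tr N_{12} = \tr K_{12} = 1$ satisfying
\begin{align*}
M_1 \ox M_2 + N_{12} \ox \1_{B_1 B_2} \geq 0, \qquad
-M_1 \ox M_2 + K_{12} \ox \1_{B_1 B_2} \geq 0.
\end{align*}
Once these witnesses are in hand, weak duality applied to~\eqref{eq:r-SDP-Dual} yields $2^{\ce{\cM\ox\cN}} \geq 2^{\ce{\cM}+\ce{\cN}}$.

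The main obstacle lies precisely in constructing $N_{12}$ and $K_{12}$. A natural strategy is to combine the four positivity certificates $P_i := M_i + N_i\ox\1_{B_i} \geq 0$ and $Q_i := -M_i + K_i\ox\1_{B_i} \geq 0$ by forming the four positive tensor products $P_1\ox P_2,\, P_1\ox Q_2,\, Q_1\ox P_2,\, Q_1\ox Q_2$ and searching for a convex combination whose cross-terms linear in a single $M_i$ vanish while the quadratic $M_1\ox M_2$ contribution survives with a positive coefficient; the residual $B$-independent terms could then be identified with $N_{12}\ox\1_{B_1B_2}$ via the identity $P_i + Q_i = (N_i+K_i)\ox\1_{B_i}$. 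The subtlety is that the uniform $\lambda_{ij}=1/4$ combination that cancels the linear cross-terms simultaneously kills the quadratic term, so I expect a finer weighting, informed by complementary slackness on the optimizers, to be necessary — possibly accompanied by a shift $M_{12} \to M_{12} + X\ox\1$ absorbed into the normalizations of $N_{12}$ and $K_{12}$. Should the direct dual construction prove unwieldy, a fallback is to argue in the primal~\eqref{eq:r-SDP-Prime}: assume a strictly better decomposition of $\cM\ox\cN$ exists, apply a twirling symmetrization over the two halves to restrict to product-structured decompositions, then extract from it decompositions of $\cM$ and $\cN$ whose overhead product falls below $2^{\ce{\cM}+\ce{\cN}}$, yielding a contradiction.
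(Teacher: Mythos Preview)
Your upper-bound argument is correct and essentially identical to the paper's (the paper phrases it via the Choi operators in the primal SDP, but it is the same tensor expansion).

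For the lower bound your framework is right --- dual SDP, witness built from the four positive blocks $P_i=M_i+N_i\ox\1_{B_i}$ and $Q_i=-M_i+K_i\ox\1_{B_i}$ --- but you are searching under the wrong constraint. You look for a convex combination of $P_1\ox P_2,\,P_1\ox Q_2,\,Q_1\ox P_2,\,Q_1\ox Q_2$ in which the terms linear in a single $M_i$ cancel while $M_1\ox M_2$ survives; as you observed, this is impossible. The resolution the paper uses is that the linear cross terms \emph{need not} cancel: they can be absorbed into the dual variable $\wt M$ itself without affecting the objective. The key identity you are missing is that trace preservation of $\cM,\cN$ gives
\[
\tr\bigl[(N_i\ox\1_{B_i})J_{\cM_i}\bigr]=\tr N_i=1=\tr K_i=\tr\bigl[(K_i\ox\1_{B_i})J_{\cM_i}\bigr],
\]
so any term of the form $M_1\ox(N_2-K_2)\ox\1_{B_2}$ or $(N_1-K_1)\ox\1_{B_1}\ox M_2$ contributes zero when traced against $J_{\cM}\ox J_{\cN}$. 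With this in hand one simply takes
\[
\tfrac12(P_1\ox P_2+Q_1\ox Q_2)=\wt M+\wt N\ox\1_{B_1B_2},\qquad
\tfrac12(P_1\ox Q_2+Q_1\ox P_2)=-\wt M+\wt K\ox\1_{B_1B_2},
\]
where $\wt M=M_1\ox M_2+\tfrac12\bigl(M_1\ox(N_2-K_2)+(N_1-K_1)\ox M_2\bigr)$, $\wt N=\tfrac12(N_1\ox N_2+K_1\ox K_2)$, $\wt K=\tfrac12(N_1\ox K_2+K_1\ox N_2)$. Both constraints are manifestly positive, $\tr\wt N=\tr\wt K=1$, and $\tr[\wt M\,J_{\cM\ox\cN}]=\tr[M_1J_\cM]\tr[M_2J_\cN]$ because the extra pieces of $\wt M$ vanish by the identity above.

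Note that your suggested shift $M_{12}\to M_{12}+X\ox\1_{B_1B_2}$ cannot work: such a shift changes the objective by $\tr X$ (again by trace preservation), so to keep the value you would need $\tr X=0$, and a traceless $X$ does not help with the positivity constraints. The correct shift has $M_i$ on one leg, not the identity. Your fallback ``twirl to product form'' is also unlikely to succeed, as there is no evident symmetry group whose averaging forces an arbitrary feasible $(J_1,J_2)$ for $\cM\ox\cN$ into tensor-product form.
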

\begin{proof}
It is equivalent to show that
\begin{align}\label{eq:multiplicativity}
2^{\ce{\cM\ox\cN}} = 2^{\ce{\cM}} \cdot 2^{\ce{\cN}}.
\end{align}
We prove~\eqref{eq:multiplicativity} by exploring the primal and dual SDP characterizations
of $\nu$ in~\eqref{eq:r-SDP-Prime} and~\eqref{eq:r-SDP-Dual}.

``$\leq$'': Assume the tetrad $(p_1,p_2,P_1,P_2)$ achieves $2^{\ce{\cM}}$ and
the tetrad $(q_1,q_2,Q_1,Q_2)$ achieves $2^{\ce{\cN}}$ w.r.t.~\eqref{eq:r-SDP-Prime}.
That is,
\begin{align}
    2^{\ce{\cM}} &= p_1 + p_2,\; J_\cM = P_1 - P_2,\;
      \tr_{B_1} P_1 = p_1\1_{A_1},\; \tr_{B_1} P_2 = p_2\1_{A_1}, \\
    2^{\ce{\cN}} &= q_1 + q_2,\; J_\cN = Q_1 - Q_2,\;
      \tr_{B_2} Q_1 = q_1\1_{A_2},\; \tr_{B_2} Q_2 = q_2\1_{A_2}.
\end{align}
What's more, since $\cM$ is trace-preserving, we have $p_1-p_2=1$. Similarly, $q_1-q_2=1$.
Notice that
\begin{align}
  J_{\cM\ox\cN}
&= J_{\cM} \ox J_{\cN} \\
&= (P_1 - P_2) \ox (Q_1 - Q_2) \\
&= \left(P_1\ox Q_1 + P_2 \ox Q_2\right) - (P_1\ox Q_2 + P_2\ox Q_1).
\end{align}
This yields a feasible decomposition of $J_{\cM\ox\cN}$. Further more, since
\begin{align}
    \tr_{B_1B_2}\left[P_1\ox Q_1 + P_2 \ox Q_2\right]
&=    \tr_{B_1} P_1 \ox \tr_{B_2} Q_1
    + \tr_{B_1} P_2 \ox \tr_{B_2} Q_2 \\
&= \left(p_1q_1 + p_2q_2\right) \1_{A_1A_2}, \\
    \tr_{B_1B_2}\left[P_1\ox Q_2 + P_2 \ox Q_1\right]
&=    \tr_{B_1} P_1 \ox \tr_{B_2} Q_2
    + \tr_{B_1} P_2 \ox \tr_{B_2} Q_1 \\
&= \left(p_1q_2 + p_2q_1\right) \1_{A_1A_2},
\end{align}
it holds that
\begin{align}
      2^{\ce{\cM\ox\cN}}
\leq  p_1q_1 + p_2q_2 + p_1q_2 + p_2q_1
= p_1(q_1 + q_2) + p_2(q_1 + q_2)
= 2^{\ce{\cM}} \cdot 2^{\ce{\cN}}.
\end{align}

``$\geq$'': Assume the triple $(M_1,N_1,K_1)$ achieves $2^{\ce{\cM}}$
and the triple $(M_2,N_2,K_2)$ achieves $2^{\ce{\cN}}$ w.r.t.~\eqref{eq:r-SDP-Dual}.
That is,
\begin{align}
    2^{\ce{\cM}} &= \tr[M_1J_\cM],\; 
      \tr N_1 = 1,\;
      \tr K_1 = 1,\; 
      M_1 + N_1\ox\1_{B_1} \geq 0,\; 
      - M_1 + K_1\ox\1_{B_1} \geq 0,\label{eq:assumption1} \\
    2^{\ce{\cN}} &= \tr[M_2J_\cN],\; 
      \tr N_2 = 1,\;
      \tr K_2 = 1,\;
      M_2 + N_2\ox\1_{B_2} \geq 0,\; 
      - M_2 + K_2\ox\1_{B_2} \geq 0.\label{eq:assumption2}
\end{align}
As a direct consequence, we obtain
\begin{subequations}\label{eq:trace-conditions}
\begin{align}
   1 &= \tr[N_1]=\tr[N_1^T] = \tr[\cM(N_1^T)]
                            = \tr_{AB}[(N_1\ox \1_{B_1})J_\cM], \\
   1 &= \tr[K_1]=\tr[K_1^T] = \tr[\cM(K_1^T)]
                            = \tr_{AB}[(K_1\ox \1_{B_1})J_\cM], \\
   1 &= \tr[N_2]=\tr[N_2^T] = \tr[\cN(N_2^T)] 
                            = \tr_{AB}[(N_2\ox \1_{B_2})J_\cN], \\
   1 &= \tr[K_2]=\tr[K_2^T] = \tr[\cN(K_2^T)] 
                            = \tr_{AB}[(K_2\ox \1_{B_2})J_\cN],
\end{align}
\end{subequations}
where we have used~\eqref{eq:Choi-output}
and the fact that $T$, \JJQ{$\cM$, and $\cN$} are all trace-preserving to derive the above relations.

Define the following operators:
\begin{align}
    \wt{M}_{A_1A_2B_1B_2} &:= M_1\ox M_2 
        + \frac{M_1\ox(N_2 - K_2) + (N_1 - K_1)\ox M_2}{2}, \\
    \wt{N}_{A_1A_2} &:= \frac{N_1\ox N_2 + K_1\ox K_2}{2}, \\
    \wt{K}_{A_1A_2} &:= \frac{N_1\ox K_2 + K_1\ox N_2}{2}.
\end{align}
In the following, we show that the triple $(\wt{M}, \wt{N}, \wt{K})$
is a feasible solution to $2^{\ce{\cM\ox\cN}}$ since it satisfies all the
constraints in the dual SDP~\eqref{eq:r-SDP-Dual}.
First of all, we have
\begin{align}
  \tr[\wt{N}_{A_1A_2}]
&= \frac{1}{2}\left(\tr[N_1]\tr[N_2] + \tr[K_1]\tr[K_2]\right) = 1, \\
  \tr[\wt{K}_{A_1A_2}]
&= \frac{1}{2}\left(\tr[N_1]\tr[K_2] + \tr[K_1]\tr[N_2]\right) = 1,
\end{align}
where we use the constraints that $\tr N_i= 1$, $\tr K_i=1$ for $i=1,2$. Second, we have
\begin{align}
 &\;    2\left(\wt{M}_{A_1A_2B_1B_2} + \wt{N}_{A_1A_2}\ox\1_{B_1B_2}\right) \\
=&\; 2M_1\ox M_2 + M_1\ox(N_2 - K_2) + (N_1 - K_1)\ox M_2 + N_1\ox N_2 \ox\1_{B_1B_2} + K_1\ox K_2 \ox\1_{B_1B_2} \\
=&\; \left(M_1 + N_1\ox\1_{B_1}\right)\ox\left(M_2 + N_2\ox\1_{B_2}\right)
    + \left(- M_1 + K_1\ox\1_{B_1}\right)\ox\left(- M_2 + K_2\ox\1_{B_2}\right) \\
\geq&\; 0,
\end{align}
where the inequality follows from Eqs.~\eqref{eq:assumption1} and~\eqref{eq:assumption2}.
This gives
\begin{align}
    \wt{M}_{A_1A_2B_1B_2} + \wt{N}_{A_1A_2}\ox\1_{B_1B_2}\geq0.
\end{align}
Similarly, we have
\begin{align}
 &\;    2\left(- \wt{M}_{A_1A_2B_1B_2} + \wt{K}_{A_1A_2}\ox\1_{B_1B_2}\right) \\
=&\; - 2M_1\ox M_2 - M_1\ox(N_2 - K_2) - (N_1 - K_1)\ox M_2 + N_1\ox K_2 \ox\1_{B_1B_2} + K_1\ox N_2  \ox\1_{B_1B_2}\\
=&\; \left(M_1 + N_1\ox\1_{B_1}\right)\ox\left(-M_2 + K_2\ox\1_{B_2}\right)
    + \left(- M_1 + K_1\ox\1_{B_1}\right)\ox\left(M_2 + N_2\ox\1_{B_2}\right) \\
\geq&\; 0,
\end{align}
yielding
\begin{align}
    - \wt{M}_{A_1A_2B_1B_2} + \wt{K}_{A_1A_2}\ox\1_{B_1B_2}\geq0.
\end{align}
This concludes that the triple $(\wt{M}, \wt{N}, \wt{K})$ is a feasible solution.

Since the triple $(\wt{M}, \wt{N}, \wt{K})$ is a feasible solution, we conclude that
\begin{align}
    2^{\ce{\cM\ox\cN}}
&\geq \tr[\wt{M}_{A_1A_2B_1B_2}J_{\cM\ox\cN}] \\
&=  \tr\left[M_1 J_\cM\right]\tr\left[M_2 J_\cN\right] \\
&\qquad    + \frac{\tr\left[M_1 J_\cM\right]\tr\left[(N_2-K_2)J_{\cN}\right]
      + \tr\left[(N_1-K_1) J_\cM\right]\tr\left[M_2 J_{\cN}\right]}{2} \\
&=  \tr\left[M_1 J_\cM\right]\tr\left[M_2 J_\cN\right] \\
&=  2^{\ce{\cM}}\cdot 2^{\ce{\cM}},
\end{align}
where the second equality follows from~\eqref{eq:trace-conditions}
and the last equality follows from~\eqref{eq:assumption1} and~\eqref{eq:assumption2}.
We are done.
\end{proof}

\begin{theorem}[Subadditivity w.r.t. composition]\label{thm:composition-subadditivity}
Let $\cM_{A_2\to A_3}$ and $\cN_{A_1\to A_2}$ be two HPTP maps.
It holds that
\begin{align}\label{eq:composition-submultiplicativity}
  \ce{\cM\circ\cN} \leq \ce{\cM} + \ce{\cN}.
\end{align}
\end{theorem}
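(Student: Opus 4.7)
The plan is to obtain the bound by composing the two-term optimal decompositions guaranteed by Theorem~\ref{thm:N2}. Concretely, I would invoke Theorem~\ref{thm:N2} to write
\begin{align}
    \cM = p_1 \cO_1 - p_2 \cO_2, \qquad \cN = q_1 \cP_1 - q_2 \cP_2,
\end{align}
where $p_1,p_2,q_1,q_2 \geq 0$, each of $\cO_1,\cO_2,\cP_1,\cP_2$ is CPTP, $p_1 + p_2 = 2^{\nu(\cM)}$, and $q_1 + q_2 = 2^{\nu(\cN)}$.

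Next I would simply multiply out the composition:
\begin{align}
    \cM \circ \cN
    = p_1 q_1 \,\cO_1\!\circ\!\cP_1 + p_2 q_2 \,\cO_2\!\circ\!\cP_2 - p_1 q_2 \,\cO_1\!\circ\!\cP_2 - p_2 q_1 \,\cO_2\!\circ\!\cP_1.
\end{align}
Because the composition of two CPTP maps is again CPTP, each of the four maps $\cO_i\!\circ\!\cP_j$ belongs to the feasible set of the program defining $\nu(\cM\circ\cN)$. Thus this is a valid quasiprobability decomposition of $\cM\circ\cN$ into CPTPs, and by definition of $\nu$ we get
\begin{align}
    2^{\nu(\cM\circ\cN)}
    \;\leq\; p_1 q_1 + p_2 q_2 + p_1 q_2 + p_2 q_1
    \;=\; (p_1+p_2)(q_1+q_2)
    \;=\; 2^{\nu(\cM)}\cdot 2^{\nu(\cN)}.
\end{align}
Taking $\log_2$ on both sides yields the claimed inequality~\eqref{eq:composition-submultiplicativity}.

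There is essentially no obstacle: the argument is a direct consequence of Theorem~\ref{thm:N2} together with the closure of the set of CPTP maps under composition. The only subtle points to double check are that $\cM\circ\cN$ is itself HPTP (so that $\nu(\cM\circ\cN)$ is defined) — which follows immediately because HPTP maps are closed under composition — and that the inequality need not be tight in general, which is why the statement is subadditivity rather than additivity, in contrast to Theorem~\ref{thm:additivity}.
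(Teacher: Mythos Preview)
Your proposal is correct and follows essentially the same approach as the paper's proof: both invoke Theorem~\ref{thm:N2}, expand the composition into four CPTP terms, and bound $2^{\nu(\cM\circ\cN)}$ by $(p_1+p_2)(q_1+q_2)$. The only cosmetic difference is that the paper regroups the four terms into two CPTP maps using convexity before applying the bound, whereas you feed the four-term decomposition directly into the definition~\eqref{eq:implementability}; both are equally valid.
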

\begin{proof}
It is equivalent to show that
\begin{align}\label{eq:composition-submultiplicativity-2}
2^{\ce{\cM\circ\cN}} \leq 2^{\ce{\cM}} \cdot 2^{\ce{\cN}}.
\end{align}
We prove~\eqref{eq:composition-submultiplicativity-2} by exploring
the alternative characterization of $\ce{\cdot}$ proved in Theorem~\ref{thm:N2}.
Assume the ensemble $\{(p_1,\cP_1),(p_2,\cP_2)\}$ achieves $2^{\ce{\cM}}$
and the ensemble $\{(q_1,\cQ_1),(q_2,\cQ_2)\}$ achieves $2^{\ce{\cN}}$.
That is,
\begin{align}
2^{\ce{\cM}} &= p_1+p_2,\quad  \cM = p_1\cP_1 - p_2\cP_2,\\
2^{\ce{\cN}} &= q_1+q_2,\quad \cN = q_1\cQ_1 - q_2\cQ_2.
\end{align}
Notice that
\begin{align}
  \cM\circ\cN
&= \left(p_1\cP_1 - p_2\cP_2\right)\circ\left(q_1\cQ_1 - q_2\cQ_2\right) \\
&= \left[p_1q_1\cP_1\circ\cQ_1 + p_2q_2\cP_2\circ\cQ_2\right]
 - \left[p_1q_2\cP_1\circ\cQ_2 + p_2q_1\cP_2\circ\cQ_1\right] \\
&\equiv r_1\cO_1 - r_2\cO_2,\label{eq:KXPtaRgLSo}
\end{align}
where
\begin{align}
    r_1 &:= p_1q_1 + p_2q_2, \\
    r_2 &:= p_1q_2 + p_2q_1, \\
    \cO_1 &:= \frac{p_1q_1}{r_1}\cP_1\circ\cQ_1 + \frac{p_2q_2}{r_1}\cP_2\circ\cQ_2, \\
    \cO_2 &:= \frac{p_1q_2}{r_2}\cP_1\circ\cQ_2 + \frac{p_2q_1}{r_2}\cP_2\circ\cQ_1.
\end{align}
Since the set of quantum channels is closed under convex combination,
the above defined $\cO_1$ and $\cO_2$ are valid quantum channels.
Then~\eqref{eq:KXPtaRgLSo} yields a feasible decomposition of $\cM\circ\cN$, resulting
\begin{align}
  2^{\ce{\cM\circ\cN}} \leq r_1 + r_2
= (p_1+p_2)(q_1+q_2)
= 2^{\ce{\cM}} \cdot 2^{\ce{\cN}}.
\end{align}
\end{proof}

As a direct corollary of the subadditivity property w.r.t. the composition operation,
we find that the implementability measure is invariant under unitary
quantum channels for both pre- and post-processing.

\begin{corollary}[Unitary channel invariance]\label{lemma:unitary invariance}
Let $\cN_{A\to A}$ be an HPTP map. For arbitrary channels $\cU(\cdot):=U(\cdot)U^\dagger$
and $\cV(\cdot):=V(\cdot)V^\dagger$, where $U$ and $V$ are unitaries, it holds that
\begin{align}
  \ce{\cU\circ\cN\circ\cV} = \ce{\cN}.
\end{align}
\end{corollary}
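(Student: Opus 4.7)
The plan is to derive the unitary invariance as a two-sided inequality, using subadditivity (Theorem~\ref{thm:composition-subadditivity}) together with faithfulness (Lemma~\ref{lemma:faithfulness}) to pin down both directions. The key observation is that a unitary channel $\cU(\cdot)=U(\cdot)U^\dagger$ is itself CPTP, so by faithfulness $\ce{\cU}=0$, and its inverse $\cU^{-1}(\cdot)=U^\dagger(\cdot)U$ is again a unitary channel, hence CPTP with $\ce{\cU^{-1}}=0$. The same holds for $\cV$.

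For the upper bound, I would simply apply Theorem~\ref{thm:composition-subadditivity} twice to the composition $\cU\circ\cN\circ\cV$, obtaining
\begin{align}
  \ce{\cU\circ\cN\circ\cV}
  \leq \ce{\cU} + \ce{\cN\circ\cV}
  \leq \ce{\cU} + \ce{\cN} + \ce{\cV}
  = \ce{\cN},
\end{align}
where the last equality uses $\ce{\cU}=\ce{\cV}=0$.

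For the matching lower bound, I would use invertibility of unitary channels to write $\cN = \cU^{-1}\circ(\cU\circ\cN\circ\cV)\circ\cV^{-1}$, and then apply subadditivity again:
\begin{align}
  \ce{\cN}
  = \ce{\cU^{-1}\circ(\cU\circ\cN\circ\cV)\circ\cV^{-1}}
  \leq \ce{\cU^{-1}} + \ce{\cU\circ\cN\circ\cV} + \ce{\cV^{-1}}
  = \ce{\cU\circ\cN\circ\cV}.
\end{align}
Combining the two inequalities yields the claimed equality.

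I do not anticipate any real obstacle here: the statement is essentially a packaging result, and the only mild subtlety is remembering to invoke that $\cU^{-1}$ and $\cV^{-1}$ are themselves unitary channels (and hence CPTP with vanishing physical implementability), which is exactly where the strict invertibility of unitary channels mentioned earlier in Section~\ref{sec:preliminaries} (via Wigner's theorem) comes into play. Once that is noted, both directions are one-line applications of Theorem~\ref{thm:composition-subadditivity} and Lemma~\ref{lemma:faithfulness}.
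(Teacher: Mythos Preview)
Your proposal is correct and follows essentially the same argument as the paper: both directions are obtained by combining faithfulness (Lemma~\ref{lemma:faithfulness}) with subadditivity under composition (Theorem~\ref{thm:composition-subadditivity}), using that the inverse of a unitary channel is again a unitary channel. The paper writes the inverse as $\cU^\dagger(\cdot)=U^\dagger(\cdot)U$ rather than $\cU^{-1}$, but otherwise the structure is identical.
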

\begin{proof}
First of all, we have $\ce{\cU}=0$ for arbitrary unitary channel $\cU$ by
Lemma~\ref{lemma:faithfulness}.
Eq.~\eqref{eq:composition-submultiplicativity} yields
\begin{align}\label{eq:kvtPquz-1}
    \ce{\cU\circ\cN\circ\cV}
\leq \ce{\cU} + \ce{\cN} + \ce{\cV} = \ce{\cN}.
\end{align}
For the unitary channel $\cU(\cdot):=U(\cdot)U^\dagger$,
set $\cU^\dagger(\cdot):=U^\dagger(\cdot)U$. One can check that
$\cU^\dagger\circ\cU=\cU\circ\cU^\dagger=\id$.
That is, $\cU^\dagger$ is the inverse channel of $\cU$.
Notice that
\begin{align}
    \cU^\dagger \circ \left(\cU\circ\cN\circ\cV\right)\circ\cV^\dagger = \cN.
\end{align}
Applying Eq.~\eqref{eq:composition-submultiplicativity} to the above equality leads to
\begin{align}\label{eq:kvtPquz-2}
    \ce{\cN} = \ce{\cU^\dagger \circ \left(\cU\circ\cN\circ\cV\right)\circ\cV^\dagger}
    \leq \ce{\cU^\dagger}+ \ce{\cU\circ\cN\circ\cV} + \ce{\cV^\dagger}
  = \ce{\cU\circ\cN\circ\cV}.
\end{align}
Eqs.~\eqref{eq:kvtPquz-1} and~\eqref{eq:kvtPquz-2} together conclude the proof.
\end{proof}

\vb
From the resource theory perspective~\cite{chitambar2019quantum},
the set of linear maps under investigation is HPTP maps
(Hermitian- and trace-preserving maps),
while the set of \emph{free maps} is CPTP maps
(completely positive and trace-preserving maps),
because they can be perfectly physically implemented.
Quantum superchannels transform CPTP maps to CPTP maps~\cite{chiribella2008transforming},
thus they serve as a natural candidate of \emph{free supermaps} since they do not
incur physical implementability when operating on a CPTP map.
We show in the following the \emph{monotonicity} property, which
states that a superchannel can never render a linear map
more difficult to be physically implemented. 

\begin{theorem}[Monotonicity]\label{thm:Monotonicity}
Let $\cN_{A\to A}$ be an HPTP map and let $\Theta$ be a superchannel.
It holds that
\begin{align}
    \ce{\Theta(\cN)} \leq \ce{\cN}.
\end{align}
\end{theorem}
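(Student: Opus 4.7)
The plan is to reduce the monotonicity statement directly to the two-channel decomposition characterization of $\nu$ proved in Theorem~\ref{thm:N2}, exploiting the two defining features of a superchannel: it is a linear map on the space of quantum maps, and it sends CPTP maps to CPTP maps. These two properties together automatically make a superchannel ``cost-free'' with respect to $\nu$.

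Concretely, I would first invoke Theorem~\ref{thm:N2} to pick an optimal decomposition $\cN = \eta_1 \cO_1 - \eta_2 \cO_2$ with $\eta_1,\eta_2\geq 0$, $\cO_1,\cO_2$ CPTP, and $\eta_1+\eta_2 = 2^{\ce{\cN}}$. Applying the superchannel $\Theta$ to both sides and using its linearity gives
\begin{align}
\Theta(\cN) = \eta_1\,\Theta(\cO_1) - \eta_2\,\Theta(\cO_2).
\end{align}
By the definition of a superchannel, $\Theta(\cO_1)$ and $\Theta(\cO_2)$ are CPTP maps, so this is a feasible decomposition of $\Theta(\cN)$ into a difference of two CPTPs with nonnegative coefficients summing to $\eta_1+\eta_2$. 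Applying Theorem~\ref{thm:N2} again, now in the ``$\leq$'' direction, yields $2^{\ce{\Theta(\cN)}} \leq \eta_1+\eta_2 = 2^{\ce{\cN}}$, and taking logarithms gives the claim.

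Before writing this down, I would briefly check a well-definedness point: $\ce{\Theta(\cN)}$ is only defined on HPTP maps, so I need $\Theta(\cN)$ to be HPTP. This is immediate since CPTP maps span HPTP maps over $\bR$, and $\Theta$ is an $\bR$-linear map that preserves the CPTP cone, so it must map HPTP maps into HPTP maps. (Trace preservation and Hermiticity preservation of the output follow from linearity applied to the decomposition $\cN = \eta_1\cO_1 - \eta_2\cO_2$ together with trace and Hermiticity preservation of $\Theta(\cO_i)$.)

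I do not anticipate a serious obstacle here; the proof is essentially a one-line application of linearity plus the ``free maps'' characterization of superchannels, combined with the optimal two-channel decomposition from Theorem~\ref{thm:N2}. The only place where one might slip is to mistakenly try to prove monotonicity by manipulating the primal or dual SDP in~\eqref{eq:r-SDP-Prime} and~\eqref{eq:r-SDP-Dual} directly; that route would require constructing a Stinespring-like dilation of $\Theta$ and is strictly unnecessary given Theorem~\ref{thm:N2}.
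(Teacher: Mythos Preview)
Your proof is correct. The core idea---push an optimal decomposition of $\cN$ through $\Theta$ and observe that the result is a feasible decomposition of $\Theta(\cN)$---is the same as the paper's. The difference lies in how $\Theta$ is handled: the paper invokes the structure theorem for superchannels (the realization $\Theta(\cdot)=\cP^{\text{post}}\circ(\cdot\ox\id_E)\circ\cP^{\text{pre}}$ from~\cite{chiribella2008transforming}) and then argues that each $\cP^{\text{post}}\circ(\cO_\alpha\ox\id_E)\circ\cP^{\text{pre}}$ is CPTP, whereas you work directly with the abstract defining properties (linearity on maps, CPTP-preserving). Your route is slightly more economical in that it avoids the dilation theorem entirely; the paper's route has the mild advantage of making the linearity of $\Theta$ on arbitrary linear maps explicit rather than assumed. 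Your side remark that the dilation is ``strictly unnecessary'' is apt---it is exactly what the paper uses, but only to re-derive the two properties you invoke directly. The use of the two-term decomposition from Theorem~\ref{thm:N2} versus the paper's general $\sum_\alpha\eta_\alpha\cO_\alpha$ is cosmetic.
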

\begin{proof}
Since $\Theta$ is a superchannel, there exist a Hilbert space $\cH_E$ wih $d_E\leq d_A$
and two CPTP maps $\cP^{\text{pre}}_{A\to AE}$, $\cP^{\text{post}}_{AE\to A}$
such that~\cite{chiribella2008transforming}
\begin{align}\label{eq:superchannel}
\Theta(\cN)
= \cP^{\text{post}}_{AE\to A}\circ\left(\cN_{A\to A}\ox \id_{E}\right)
            \circ \cP^{\text{pre}}_{A\to AE}.
\end{align}
See Fig.~\ref{fig:superchannel} for illustration.
Assume the decomposition $\cN= \sum_{\alpha}\eta_{\alpha} \cO_{\alpha}$
achieves $\ce{\cN}$ w.r.t.~\eqref{eq:implementability},
i.e., $\ce{\cN}=\log\left(\sum_\alpha\vert\eta_\alpha\vert\right)$
and each $\cO_\alpha$ is a CPTP. Substituting this decomposition into~\eqref{eq:superchannel}
yields
\begin{align}
\Theta(\cN)
&= \cP^{\text{post}}_{AE\to A}\circ\left(\cN_{A\to A}\ox \id_{E}\right)
            \circ \cP^{\text{pre}}_{A\to AE} \\
&= \sum_{\alpha}\eta_{\alpha}
    \cP^{\text{post}}_{AE\to A}\circ\left(\cO_\alpha\ox \id_{E}\right)
            \circ \cP^{\text{pre}}_{A\to AE} \\
&= \sum_{\alpha}\eta_{\alpha}\cO^\prime_\alpha,
\end{align}
where each $\cO^\prime_\alpha\equiv\cP^{\text{post}}\circ\cO_\alpha
        \circ\cP^{\text{pre}}$ is a quantum channel.
This induces an valid linear decomposition of $\Theta(\cN)$ and thus
$\ce{\Theta(\cN)} \leq \ce{\cN}$.
\end{proof}

\begin{figure}[!htbp]
  \centering
  \includegraphics[width=0.6\textwidth]{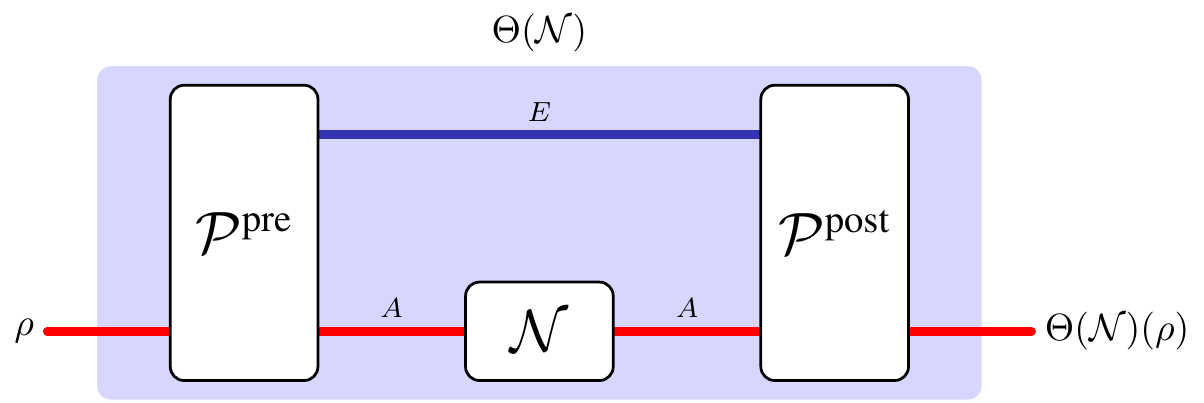}
  \caption{A superchannel transforms a quantum linear map to quantum linear map.}
  \label{fig:superchannel}
\end{figure}

This resource perspective motivates us to 
investigate the physical implementability of linear maps
within the quantum resource theory 
framework~\cite{chitambar2019quantum} and
explore the widely studied robustness measure as well as other channel resource measures~\cite{WW18,Diaz2018,Fang2018,WWS19,yuan2019universal,Fang2019a,Wang2019}.
We investigate in Appendix~\ref{sec:robustness} 
the proposed robustness measure 
and show that the two measures,
physical implementability and robustness,
are actually equivalent in some sense.

\subsection{Bounds}

In this section, we derive lower and upper bounds on $\ce{\cN}$
in terms of the commonly used trace norm of the corresponding Choi operator $J_\cN$.
What's more, we show explicitly that these bounds are tight. Particularly, the bounds can be saturated by the inverse maps of some well-known quantum channels.

\begin{theorem}\label{thm:boundtracenorm}
Let $\cH$ be a $d$-dimensional Hilbert space and Let $\cN$ be an HPTP map in $\cH$.
It holds that
\begin{align}\label{eq:boundtracenorm}
    \frac{\norm{J_{\cN}}{1}}{d} \leq 2^{\ce{\cN}} \leq \norm{J_{\cN}}{1}.
\end{align}
\end{theorem}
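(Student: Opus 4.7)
The plan is to attack both inequalities using the Jordan decomposition $J_\cN = J_+ - J_-$, where $J_+, J_- \geq 0$ have orthogonal supports, so that $\|J_\cN\|_1 = \tr J_+ + \tr J_-$. The upper bound will come from exhibiting a feasible primal point of the SDP~\eqref{eq:r-SDP-Prime} (via the CPTN relaxation in Lemma~\ref{lem:N2-sub}), while the lower bound will come from exhibiting a feasible dual point of~\eqref{eq:r-SDP-Dual} built from the spectral projectors of $J_+$ and $J_-$.

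For the \textbf{upper bound}, I would invoke Lemma~\ref{lem:N2-sub}, which lets me use the inequality constraints $\tr_B J_i \leq p_i \1_A$ in place of equalities. I take $J_1 := J_+$ and $J_2 := J_-$, so that $J_\cN = J_1 - J_2$ holds automatically. Since $J_\pm \geq 0$, the operators $\tr_B J_\pm$ are positive semidefinite and therefore $\|\tr_B J_\pm\|_\infty \leq \tr(\tr_B J_\pm) = \tr J_\pm$. Thus the choice $p_1 := \tr J_+$ and $p_2 := \tr J_-$ is feasible, and yields the objective value $p_1 + p_2 = \tr J_+ + \tr J_- = \|J_\cN\|_1$. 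By Lemma~\ref{lem:N2-sub}, this upper-bounds $2^{\ce{\cN}}$.

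For the \textbf{lower bound}, I would explicitly construct a feasible triple $(M, N, K)$ for the dual SDP~\eqref{eq:r-SDP-Dual}. Let $P_+$ and $P_-$ be the orthogonal projectors onto the supports of $J_+$ and $J_-$, respectively, and set
\begin{align}
M_{AB} := \frac{1}{d}\left(P_+ - P_-\right), \qquad N_A := \frac{\1_A}{d}, \qquad K_A := \frac{\1_A}{d}.
\end{align}
Clearly $\tr N_A = \tr K_A = 1$. To verify $M + N \otimes \1_B \geq 0$, observe that $P_+ - P_-$ has eigenvalues in $\{-1, 0, +1\}$ (since $P_+ P_- = 0$), so $M$ has eigenvalues in $\{-1/d, 0, 1/d\}$, and shifting by $N_A \otimes \1_B = \1_{AB}/d$ makes every eigenvalue nonnegative; the same argument, applied to $-M$, gives $-M + K \otimes \1_B \geq 0$. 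Finally,
\begin{align}
\tr[M J_\cN] = \frac{1}{d}\tr\bigl[(P_+ - P_-)(J_+ - J_-)\bigr] = \frac{1}{d}\bigl(\tr J_+ + \tr J_-\bigr) = \frac{\|J_\cN\|_1}{d},
\end{align}
where I used $P_+ J_+ = J_+$, $P_- J_- = J_-$, $P_+ J_- = 0$, and $P_- J_+ = 0$. Weak duality (or strong duality, already established for these programs) then gives $2^{\ce{\cN}} \geq \|J_\cN\|_1/d$.

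The proof is essentially a matter of guessing the right test points, and the only mild subtlety is the factor of $d$ in the lower-bound construction: one must scale $P_+ - P_-$ by $1/d$ so that adding the maximally mixed ``normalization'' $\1_{AB}/d$ is enough to restore positivity. I do not anticipate any serious obstacles; both bounds follow from elementary manipulations once the Jordan decomposition of $J_\cN$ is in hand.
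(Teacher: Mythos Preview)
Your argument is correct. The upper bound is handled exactly as in the paper: Jordan (eigenvalue) decomposition of $J_\cN$, then feed the positive and negative parts into the CPTN-relaxed primal via Lemma~\ref{lem:N2-sub}.

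The lower bound, however, takes a genuinely different route. The paper stays on the primal side: for any feasible $(p_1,p_2,J_1,J_2)$ of~\eqref{eq:r-SDP-Prime}, the triangle inequality gives $\|J_\cN\|_1 = \|J_1 - J_2\|_1 \leq \|J_1\|_1 + \|J_2\|_1 = d(p_1+p_2)$, since $\tr_B J_i = p_i\1_A$ forces $\tr J_i = d\,p_i$. You instead work on the dual side, exhibiting the explicit feasible triple $M = (P_+ - P_-)/d$, $N = K = \1_A/d$ and evaluating the objective. Both approaches are short; the paper's primal argument is perhaps a hair more robust (it never needs the dual program or strong duality, only the definition of $\nu$), while your dual construction is more constructive and makes transparent exactly which ``witness'' certifies the bound. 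Either is perfectly acceptable here.
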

\begin{proof}
To show the first inequality in~\eqref{eq:boundtracenorm}, we make use of the
primal SDP~\eqref{eq:r-SDP-Prime}. Notice that constraint~\eqref{eq:r-SDP-Prime-1} yields
\begin{align}\label{eq:nre}
  \norm{J_{\cN}}{1}
= \norm{J_1 - J_2}{1}
\leq \norm{J_1}{1} + \norm{J_2}{1}
= d(p_1+p_2),
\end{align}
where the inequality follows from the triangle inequality and the last equality
follows from the constraints~\eqref{eq:r-SDP-Prime-2} and~\eqref{eq:r-SDP-Prime-3}, respectively.
Since Eq.~\eqref{eq:nre} holds for arbitrary decomposition satisfying the constraints, it holds
in particular for the optimal decomposition and thus
\begin{align}
    \ce{\cN} = p_1 + p_2 \geq \norm{J_{\cN}}{1}/d.
\end{align}

To show the second inequality in~\eqref{eq:boundtracenorm}, recall that
 $J_\cN$ is Hermitian and thus diagonalizable. Consider the eigenvalue decomposition
\begin{align}
  J_\cN
= \sum_{i}\lambda_i\ketbra{\psi_i}{\psi_i}
= \sum_{i:\lambda_i\geq 0}\lambda_i\ketbra{\psi_i}{\psi_i}
 - \sum_{j:\lambda_j < 0}|\lambda_j|\ketbra{\psi_j}{\psi_j},
\end{align}
where in the second equality we group the eigenstates by the sign of the corresponding eigenvalues.
Let
\begin{align}
\eta_1 &:= \sum_{i:\lambda_i\geq 0}\lambda_i,\\
J_1 &:= \sum_{i:\lambda_i\geq 0}\frac{\lambda_i}{\eta_1}\ketbra{\psi_i}{\psi_i},\\
\eta_2 &:= \sum_{j:\lambda_j < 0}|\lambda_j|,\\
J_2 &:= \sum_{j:\lambda_j < 0}\frac{|\lambda_j|}{\eta_2}\ketbra{\psi_j}{\psi_j}.
\end{align}
By construction, we have $J_\cN=\eta_1J_1 - \eta_2J_2$,
$\eta_1,\eta_2\geq0$, $J_1,J_2\geq0$, and $\tr J_1 = \tr J_2 = 1$. What's more,
since $J_1\geq0$ and $\tr J_1 = 1$, the eigenvalues of $\tr_BJ_1$ are always less than $1$,
and thus $\tr_BJ_1\leq\1_A$. Similarly, $\tr_BJ_2\leq\1_A$.
By Lemma \ref{lem:N2-sub} we know the optimal value for decomposing $\cN$ onto CPTN and CPTP are the same, thus we can conclude
\begin{align}
    2^{\ce{\cN}} \leq \eta_1+\eta_2=\norm{J_\cN}{1}.
\end{align}
\end{proof}

\begin{remark}
As we will show later in Theorem~\ref{lem:unital},
the trace norm lower bound in~\eqref{eq:boundtracenorm} is tight and can be saturated
by the set of so-called mixed unitary linear maps. On the other hand, the trace upper bound can be saturated by the amplitude damping channel in the limit sense by Theorem~\ref{thm:AD channel}.
\end{remark}

\subsection{Analytic expression for particular linear maps}\label{sec:Analytic expression}

In this subsection, we analytically evaluate  the physical implementability 
for some linear maps, which are the inverse map of some practically interesting quantum CPTP maps.

\paragraph*{Inverse map of the amplitude damping channel}
The qubit amplitude damping channel $\cA_{\epsilon}$ is given by
Kraus operators $A_0:=\proj{0}+\sqrt{1-\epsilon}\proj{1}$ and $A_1:=\sqrt{\epsilon}\ketbra{0}{1}$,
where $\epsilon\in[0,1]$. That is,
\begin{align}\label{eq:AD-channel}
    \cA_{\epsilon}(\rho) = A_0 \rho A_0^\dagger + A_1 \rho A_1^\dagger.
\end{align}
It turns out that $\cA_{\epsilon}$ is invertible, and the Choi operator of its
invertible map $\cA_{\epsilon}^{-1}$ has the form
\begin{align}
   J_{\cA_{\epsilon}^{-1}}=
   \begin{bmatrix}
1&0&0&\frac{1}{\sqrt{1-\epsilon}}\\
0&0&0&0\\
0&0&\frac{-\epsilon}{1-\epsilon}&0\\
\frac{1}{\sqrt{1-\epsilon}}&0&0&\frac{1}{1-\epsilon}
\end{bmatrix}.
\end{align}

\begin{theorem}\label{thm:AD channel}
For $\epsilon\in[0,1)$,
it holds that $\ce{\cA_{\epsilon}^{-1}} = \log\frac{1+\epsilon}{1-\epsilon}$.
What's more,
\begin{align}
    \lim_{\epsilon\to1}\ce{\cA_{\epsilon}^{-1}} 
    = \log\norm{J_{\cA_{\epsilon}^{-1}}}{1}.
\end{align}
\end{theorem}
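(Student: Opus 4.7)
My plan is to evaluate $\ce{\cA_\epsilon^{-1}}$ via the primal--dual SDP pair~\eqref{eq:r-SDP-Prime}--\eqref{eq:r-SDP-Dual}: first I exhibit a primal feasible decomposition whose overhead matches the claim, and then I certify optimality by an explicit dual witness. On the primal side I take
\begin{align}
\cA_\epsilon^{-1} = \frac{1}{1-\epsilon}\,\cO_1 - \frac{\epsilon}{1-\epsilon}\,\cO_2,
\end{align}
where $\cO_1(\rho) := p\rho + (1-p)Z\rho Z$ is the qubit dephasing channel with $p := (1+\sqrt{1-\epsilon})/2$, and $\cO_2(\rho) := \tr[\rho]\proj{0}$ is the reset--to--$\ket{0}$ channel. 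Since $p\in[1/2,1]$ whenever $\epsilon\in[0,1)$, both $\cO_1$ and $\cO_2$ are genuine CPTP maps. Inverting~\eqref{eq:AD-channel} yields
\begin{align}
\cA_\epsilon^{-1}(\sigma)=\begin{pmatrix}\sigma_{00}-\tfrac{\epsilon}{1-\epsilon}\sigma_{11} & \tfrac{1}{\sqrt{1-\epsilon}}\sigma_{01}\\ \tfrac{1}{\sqrt{1-\epsilon}}\sigma_{10} & \tfrac{1}{1-\epsilon}\sigma_{11}\end{pmatrix},
\end{align}
and a $2\times 2$ entry-wise check against $\tfrac{1}{1-\epsilon}\cO_1(\sigma)-\tfrac{\epsilon}{1-\epsilon}\cO_2(\sigma)$ confirms the identity, giving $2^{\ce{\cA_\epsilon^{-1}}}\leq(1+\epsilon)/(1-\epsilon)$ by definition.

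For the matching lower bound I propose the remarkably clean dual witness $N = K := \proj{1}_A$ and $M_{AB} := \proj{11} - \proj{10}$ in~\eqref{eq:r-SDP-Dual}. Feasibility is immediate: $\tr N = \tr K = 1$, and the two PSD conditions collapse to $M + N\ox\1_B = 2\proj{11}\geq 0$ and $-M + K\ox\1_B = 2\proj{10}\geq 0$. Because $M$ is diagonal in the computational basis, $\tr[M J_{\cA_\epsilon^{-1}}]$ picks up only the $\proj{10}$ and $\proj{11}$ diagonal entries of the Choi matrix, producing $\epsilon/(1-\epsilon) + 1/(1-\epsilon) = (1+\epsilon)/(1-\epsilon)$. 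Weak duality then yields $2^{\ce{\cA_\epsilon^{-1}}}\geq(1+\epsilon)/(1-\epsilon)$, closing the gap.

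For the limit assertion I diagonalize $J_{\cA_\epsilon^{-1}}$: the $\ket{01}$ row/column vanishes, the $\ket{10}$ row/column decouples with eigenvalue $-\epsilon/(1-\epsilon)$, and the $2\times 2$ block on $\mathrm{span}\{\ket{00},\ket{11}\}$ has trace $(2-\epsilon)/(1-\epsilon)$ and vanishing determinant, so its eigenvalues are $0$ and $(2-\epsilon)/(1-\epsilon)$. Hence $\norm{J_{\cA_\epsilon^{-1}}}{1} = 2/(1-\epsilon)$, and
\begin{align}
\log\norm{J_{\cA_\epsilon^{-1}}}{1} - \ce{\cA_\epsilon^{-1}} = \log\frac{2}{1+\epsilon} \longrightarrow 0 \quad\text{as}\quad \epsilon\to 1^{-},
\end{align}
which is the intended meaning of the limit statement and exhibits asymptotic tightness of the upper bound in Theorem~\ref{thm:boundtracenorm}. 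The only real obstacle is \emph{guessing} the correct primal decomposition and dual witness; both are in fact dictated by complementary slackness, since the range of the primal $J_1$ sits in $\mathrm{span}\{\ket{00},\ket{11}\}$ and the range of $J_2$ in $\mathrm{span}\{\ket{00},\ket{10}\}$, which forces $-M+K\ox\1$ and $M+N\ox\1$ to vanish on exactly these subspaces and (up to overall scaling) pins down the rank-one $N$, $K$, and $M$ above.
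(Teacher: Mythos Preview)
Your proof is correct. On the primal side you recover exactly the paper's decomposition: the Choi operators $J_1,J_2$ the paper writes down are precisely those of the dephasing channel with off-diagonal damping factor $\sqrt{1-\epsilon}$ and of the reset-to-$\ket{0}$ channel, respectively, so here the only difference is presentational (you name the channels, the paper lists the matrices). On the dual side your witness is genuinely different and considerably cleaner: the paper uses a full $4\times 4$ non-diagonal $M_{AB}$ together with indefinite $2\times 2$ matrices $N_A,K_A$, whereas your choice $N=K=\proj{1}$, $M=\proj{11}-\proj{10}$ is diagonal and rank-one in each piece, so feasibility is immediate and the objective value can be read off from two diagonal entries of $J_{\cA_\epsilon^{-1}}$. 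Your complementary-slackness remark explains why such a sparse witness exists. Finally, you actually supply the computation $\norm{J_{\cA_\epsilon^{-1}}}{1}=2/(1-\epsilon)$ and the limit argument, which the paper's proof omits entirely; this is a welcome addition.
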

\begin{proof}
We prove this theorem by exploring the primal and dual SDP characterizations
in Eqs.~\eqref{eq:r-SDP-Prime} and~\eqref{eq:r-SDP-Dual}.
Let $p_1 = 1/(1-\epsilon)$, $p_2 = \epsilon/(1-\epsilon)$,
\begin{align}
  J_1 :=
\begin{bmatrix}
\frac{1}{1-\epsilon} & 0 & 0 & \frac{1}{\sqrt{1-\epsilon}} \\
0&0&0&0\\
0&0&0&0\\
\frac{1}{\sqrt{1-\epsilon}}&0&0&\frac{1}{1-\epsilon}
\end{bmatrix},\quad
  J_2 :=    \begin{bmatrix}
\frac{\epsilon}{1-\epsilon} & 0 & 0 & 0 \\
0&0&0&0\\
0&0&\frac{\epsilon}{1-\epsilon}&0\\
0&0&0&0
\end{bmatrix}.
\end{align}
One can check that the tetrad $(p_1,p_2,J_1,J_2)$ is a feasible solution to
the primal SDP~\eqref{eq:r-SDP-Prime}, yielding
\begin{align}\label{eq:AD channel 1}
  2^{\ce{\cA_{\epsilon}^{-1}}} \leq p_1 + p_2 = \frac{1+\epsilon}{1-\epsilon}.
\end{align}

On the other hand, set
\begin{align}
    M_{AB} :=
\begin{bmatrix}
1 & 0 & 1 & 0 \\
0& 1 & 0 & 1 \\
1 & 0 & -2 & 0 \\
0 & 1 & 0 & 0
\end{bmatrix},\quad
   K_A :=
\begin{bmatrix}
1 & 1 \\
1 & 0
\end{bmatrix},\quad
   N_A :=
\begin{bmatrix}
-1 & -1 \\
-1 & 2
\end{bmatrix}.
\end{align}
We can verify that the triple $(A,B,D)$ is a feasible solution to
the dual SDP~\eqref{eq:r-SDP-Dual}, leading to
\begin{align}\label{eq:AD channel 2}
  2^{\ce{\cA_{\epsilon}^{-1}}} \geq \tr\left[M_{AB} J_{\cA_{\epsilon}^{-1}}\right]
= \frac{1+\epsilon}{1-\epsilon}.
\end{align}
Eqs.~\eqref{eq:AD channel 1} and~\eqref{eq:AD channel 2} together give the desired result.
\end{proof}

\begin{remark}
Note that the previous work~\cite{takagi2020optimal}
has investigated the non-physical implementability of $\cA_{\epsilon}^{-1}$
w.r.t. the set of so-called implementable operations\footnote{We refer to Eqs. (2) and (3)
in~\cite{takagi2020optimal} for the definition of implementable operations.}
by imposing both lower and upper bounds
on $2^{\ce{\cA_{\epsilon}^{-1}}}$~\cite[Theorem 3]{takagi2020optimal}, which has not been tight yet.
Our Theorem~\ref{thm:AD channel} further strengthens the results on mitigating the amplitude damping noise by concluding that
the obtained upper bound is actually optimal even if a larger free
set of quantum operations is allowed.
\end{remark}

\paragraph*{Inverse map of the generalized amplitude damping channel}
The generalized amplitude damping (GAD) channel is one of the realistic
sources of noise in superconducting quantum
processor~\cite{chirolli2008decoherence,Khatri2019}, whose quantum capacity has been studied in \cite{Khatri2019,Wang2019b}
It can be viewed as the qubit analogue of the bosonic thermal channel and can be used to
model lossy processes with background noise for low-temperature systems.
The generalized amplitude damping channel is a two-parameter family of channels
described as follows:
\begin{align}
\cA_{y, N}(\rho)
:= A_{1} \rho A_{1}^{\dagger}
 + A_{2} \rho A_{2}^{\dagger}
 + A_{3} \rho A_{3}^{\dagger}
 + A_{4} \rho A_{4}^{\dagger},
\end{align}
where $y,N\in[0,1]$ and
\begin{align}
A_{1}&:=\sqrt{1-N}(\ketbra{0}{0} + \sqrt{1-y}\ketbra{1}{1}),\\
A_{2}&:=\sqrt{y(1-N)}\ketbra{0}{1},\\
A_{3}&:=\sqrt{N}(\sqrt{1-y}\ketbra{0}{0}+\ketbra{1}{1}),\\
A_{4}&:=\sqrt{y N}\ketbra{1}{0}.
\end{align}
Note that when $N = 0$, $\cA_{y, 0}$ reduces to the conventional amplitude damping channel.
Similar to the amplitude damping channel, $\cA_{y, N}$ is invertible when $y\neq 1$,
and the Choi operator of its inverse map $\cA_{y, N}^{-1}$ has the form
\begin{align}
    J_{\cA_{y, N}^{-1}}=
    \begin{bmatrix}
        &\frac{1-y+Ny}{1-y} &0 &0 &\frac{1}{\sqrt{1-y}}\\
        &0 &\frac{-Ny}{1-y} &0 &0\\
        &0 &0 &\frac{-y+Ny}{1-y} &0\\
        &\frac{1}{\sqrt{1-y}} &0 &0 &\frac{1-Ny}{1-y}
    \end{bmatrix}.
 \end{align}

\begin{lemma}\label{lemma:generalized AD}
For $y\in[0,1)$, it holds that
\begin{align}\label{eq:generalized AD}
  \gamma(\cA_{y, N}^{-1})=\frac{1+|y-2Ny|}{1-y}.
\end{align}
\end{lemma}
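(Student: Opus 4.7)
The plan is to follow the two-step strategy used in Theorem~\ref{thm:AD channel}: exhibit a primal feasible tetrad $(p_1,p_2,J_1,J_2)$ for~\eqref{eq:r-SDP-Prime} with $p_1+p_2 = (1+|y(1-2N)|)/(1-y)$ and a matching dual feasible triple $(M_{AB},N_A,K_A)$ for~\eqref{eq:r-SDP-Dual} with $\tr[M_{AB}J_{\cA_{y,N}^{-1}}]$ equal to the same value. By strong duality this will pin $2^{\ce{\cA_{y,N}^{-1}}}$ down to the claimed quantity.

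For the primal bound, the Choi operator $J_{\cA_{y,N}^{-1}}$ has a clean structure: all off-diagonal mass lives in the $\{\ket{00},\ket{11}\}$ block and is positive, while the negative contributions are the two diagonal entries $-Ny/(1-y)$ and $-y(1-N)/(1-y)$. Accordingly I would take $J_1$ to consist of the whole $2\times 2$ off-diagonal block $(1/\sqrt{1-y})(\ketbra{00}{11}+\ketbra{11}{00})$ padded along $\ketbra{00}{00}$ and $\ketbra{11}{11}$ by a single scalar $a$, and let $J_2$ be diagonal, absorbing both the negative Choi entries and the diagonal balancing imposed by $\tr_B J_i = p_i I_A$. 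Solving the partial-trace and positivity constraints forces the minimal $a$ to be $\max\left((1-y+Ny)/(1-y),\,(1-Ny)/(1-y)\right)$; one checks that this value automatically satisfies $a^2 \geq 1/(1-y)$, so the $2\times 2$ block is positive semidefinite. This yields $p_1+p_2 = 2a-1 = (1+|y(1-2N)|)/(1-y)$, as desired.

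For the matching dual witness I would adopt the same ansatz as in Theorem~\ref{thm:AD channel}: choose a Hermitian $M_{AB}$ whose nonzero entries overlap with $J_{\cA_{y,N}^{-1}}$ only on the diagonal, so that $\tr[MJ]$ is a weighted sum of the four diagonal Choi entries, and supplement it with off-diagonal $\pm 1$ entries chosen so that both $M+N_A\otimes I_B$ and $-M+K_A\otimes I_B$ become rank-deficient positive matrices for some $(N_A,K_A)$ with $\tr N_A = \tr K_A = 1$. A useful reduction is the reflection symmetry $\cA_{y,1-N} = \cX\circ\cA_{y,N}\circ\cX$, where $\cX(\cdot)=X(\cdot)X$ is the Pauli-$X$ channel; this identity is immediate from the Kraus representation of $\cA_{y,N}$, and via Corollary~\ref{lemma:unitary invariance} it gives $\ce{\cA_{y,1-N}^{-1}} = \ce{\cA_{y,N}^{-1}}$, mirroring the $N\leftrightarrow 1-N$ symmetry of the target value and reducing the analysis to $N\in[0,1/2]$. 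The main obstacle is pinning down the exact integer coefficients on the diagonal of $M_{AB}$ and the entries of $N_A$, $K_A$ so that $\tr[MJ] = (1+y-2Ny)/(1-y)$ while both matrix inequalities are saturated; since every diagonal entry of $J_{\cA_{y,N}^{-1}}$ depends linearly on $N$, the AD witness should generalize by shifting a handful of entries by amounts linear in $N$, after which the remaining PSD verification is routine linear algebra.
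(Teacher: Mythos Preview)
Your approach is correct and closely parallels the paper's. The primal ansatz you describe coincides, once specialised to the two regimes $N\le 1/2$ and $N>1/2$, with the explicit $(J_1,J_2)$ the paper writes down. For the dual the paper also splits into these two cases and exhibits witnesses $(M_{AB},N_A,K_A)$ in each; the pleasant surprise you seem not to have noticed is that for $N\le 1/2$ the amplitude-damping witness from Theorem~\ref{thm:AD channel} works \emph{verbatim}, with no shifting of entries at all. Since $M_{AB}$ has zeros at the positions where $J_{\cA_{y,N}^{-1}}$ carries its off-diagonal mass, one gets directly
\[
\tr\!\left[M_{AB}J_{\cA_{y,N}^{-1}}\right]
= \frac{(1-y+Ny)\cdot 1 + (-Ny)\cdot 1 + (-y+Ny)\cdot(-2) + (1-Ny)\cdot 0}{1-y}
= \frac{1+y-2Ny}{1-y},
\]
and the feasibility constraints on $(M_{AB},N_A,K_A)$ do not involve $J$ at all. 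Your symmetry reduction $\cA_{y,1-N}=\cX\circ\cA_{y,N}\circ\cX$ together with Corollary~\ref{lemma:unitary invariance} is a genuine simplification over the paper: it dispenses with the separate dual witness the paper constructs for $N>1/2$, at the modest cost of verifying the Kraus-level identity once.
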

\begin{proof}
We prove~\eqref{eq:generalized AD} by exploring 
the primal and dual SDP
characterizations in Eqs.~\eqref{eq:r-SDP-Prime} and~\eqref{eq:r-SDP-Dual}.
We divide the proof into two cases based on the value of $N$,
since it influences our constructions of the feasible solutions.

\textbf{Case 1: $N\leq 1/2$.} In this case, $y-2Ny \geq 0$.
We can design feasible solutions to both the primal~\eqref{eq:r-SDP-Prime}
and the dual~\eqref{eq:r-SDP-Dual} programs that both evaluate to $\frac{1+y-2Ny}{1-y}$:
\begin{align}
    J_1=\begin{bmatrix}
    \frac{1-Ny}{1-y} &0 &0& \frac{1}{\sqrt{1-y}}\\
    0 &0 &0 &0\\
    0 &0 &0 &0\\
    \frac{1}{\sqrt{1-y}} &0 &0 &\frac{1-Ny}{1-y}
    \end{bmatrix},
    \quad
    J_2=\begin{bmatrix}
    \frac{y-2Ny}{1-y} & 0 & 0 & 0\\
    0 &\frac{Ny}{1-y} & 0 & 0 \\
    0 & 0 &\frac{y-Ny}{1-y}& 0\\
    0& 0 & 0 &0
    \end{bmatrix},
\end{align}
\begin{align}
M_{AB}=
\begin{bmatrix}
1&0&1&0\\
0&1&0&1\\
1&0&-2&0\\
0&1&0&0
\end{bmatrix},
\quad
K_A=\begin{bmatrix}
1&1\\
1&0
\end{bmatrix},
\quad
N_A=\begin{bmatrix}
-1&-1\\
-1&2
\end{bmatrix}.
\end{align}

\textbf{Case 2: $N> 1/2$.} In this case, $2Ny - y \geq 0$.
We can design feasible solutions to both the primal~\eqref{eq:r-SDP-Prime}
and the dual~\eqref{eq:r-SDP-Dual} programs that both evaluate to $\frac{1-y+2Ny}{1-y}$:
\begin{align}
    J_1=\begin{bmatrix}
    \frac{1-y+Ny}{1-y} &0 &0& \frac{1}{\sqrt{1-y}}\\
    0 &0 &0 &0\\
    0 &0 &0 &0\\
    \frac{1}{\sqrt{1-y}} &0 &0 &\frac{1-y+Ny}{1-y}
    \end{bmatrix},
    \quad
    J_2=\begin{bmatrix}
    0 & 0 & 0 & 0 \\
    0 &\frac{Ny}{1-y} & 0 & 0 \\
    0 & 0 &\frac{y-Ny}{1-y}& 0 \\
    0 & 0 & 0 & \frac{2Ny-y}{1-y}
    \end{bmatrix},
\end{align}
\begin{align}
M_{AB}=
\begin{bmatrix}
1&0&1&0\\
0&-1&0&1\\
1&0&0&0\\
0&1&0&0
\end{bmatrix},
\quad
K_A=\begin{bmatrix}
1&1\\
1&0
\end{bmatrix},
\quad
N_A=\begin{bmatrix}
1&1\\
1&0
\end{bmatrix}.
\end{align}
\end{proof}

\paragraph*{The mixed unitary map.}
Let $\mathscr{U}=\{U_i\}_{i\in\cI}$ be a set of unitaries in $\cH$.
We say an HPTP map $\cT\in\hptp{\cH}$ is a \textit{mixed unitary map w.r.t. $\mathscr{U}$}
if there exists a set of real numbers $\{r_i\in\bR\}_{i\in\cI}$ such that
$\sum_{i\in\cI}r_i =1$ and
\begin{align}\label{eq:mixed unitary map}
   \cT(\cdot) := \sum_{i\in\cI}r_iU_i(\cdot)U_i^\dagger.
\end{align}
Note that the such defined mixed unitary map can be viewed a natural extension of
the mixed unitary channel intensively studied in~\cite[Chapter 4]{watrous2018theory},
by allowing negative coefficients in the mixture.
Interestingly, if $\mathscr{U}$ possesses the orthogonality property,
the non-physical implementability
of arbitrary mixed unitary map w.r.t. $\mathscr{U}$ can be evaluated analytically.

\begin{theorem}\label{lem:unital}
Let $\mathscr{U}=\{U_i\}_{i\in\cI}$ be a set of unitaries satisfying
the mutual orthogonality condition: $\forall i\neq j$,
$\tr[U_i^\dagger U_j] = 0$.
For arbitrary mixed unitary map $\cT$ w.r.t. $\mathscr{U}$ of the form~\eqref{eq:mixed unitary map},
it holds that
\begin{align}
  \ce{\cT} = \log\frac{\norm{J_\cT}{1}}{d} = \log\left(\sum_{i\in\cI}\vert r_i\vert \right).
\end{align}
where $d$ is the dimension of the system which $\cT$ is acting on.
\end{theorem}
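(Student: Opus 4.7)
The plan is to show the two equalities separately: first $\nu(\cT)\leq\log\sum_i\vert r_i\vert$ by constructing an explicit quasiprobability decomposition into two CPTP maps, and then $\nu(\cT)\geq\log(\Vert J_\cT\Vert_1/d)=\log\sum_i\vert r_i\vert$ by combining the trace-norm lower bound from Theorem~\ref{thm:boundtracenorm} with a direct evaluation of $\Vert J_\cT\Vert_1$.

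For the upper bound I would split $\cI$ according to the sign of $r_i$. Set $\eta_1:=\sum_{i:r_i\geq 0}r_i$ and $\eta_2:=\sum_{i:r_i<0}\vert r_i\vert$, and define the convex combinations
\begin{align}
\cO_1(\cdot):=\frac{1}{\eta_1}\sum_{i:r_i\geq 0} r_i\,U_i(\cdot)U_i^\dagger,\qquad
\cO_2(\cdot):=\frac{1}{\eta_2}\sum_{i:r_i<0}\vert r_i\vert\,U_i(\cdot)U_i^\dagger,
\end{align}
which are genuine CPTP maps as convex combinations of unitary channels. By construction $\cT=\eta_1\cO_1-\eta_2\cO_2$, so by Theorem~\ref{thm:N2} we obtain $2^{\nu(\cT)}\leq\eta_1+\eta_2=\sum_i\vert r_i\vert$.

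For the matching lower bound, the key step is to exploit the mutual orthogonality assumption at the level of Choi operators. Writing $\ket{\Gamma_i}:=(I\otimes U_i)\ket{\Gamma}$, a short calculation using $(I\otimes X)\ket{\Gamma}=(X^T\otimes I)\ket{\Gamma}$ shows that $\langle\Gamma_i\vert\Gamma_j\rangle=\tr[U_i^\dagger U_j]$, which by hypothesis equals $d\,\delta_{ij}$. Hence $\{\ket{\Gamma_i}/\sqrt{d}\}_{i\in\cI}$ is an orthonormal set, and since $J_\cT=\sum_i r_i\,\ketbra{\Gamma_i}{\Gamma_i}=d\sum_i r_i\,(\ket{\Gamma_i}/\sqrt{d})(\bra{\Gamma_i}/\sqrt{d})$ is already a spectral decomposition, one reads off $\Vert J_\cT\Vert_1=d\sum_i\vert r_i\vert$. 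Plugging into Theorem~\ref{thm:boundtracenorm} yields $2^{\nu(\cT)}\geq\Vert J_\cT\Vert_1/d=\sum_i\vert r_i\vert$, matching the upper bound and giving all three quantities in the statement simultaneously.

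The only nontrivial step is the Choi-level orthogonality calculation; everything else is bookkeeping. I do not foresee a genuine obstacle, since the assumption $\tr[U_i^\dagger U_j]=0$ was tailored precisely to make the rank-one eigenspaces of $J_\cT$ mutually orthogonal, which is what makes the generic inequality $\Vert\sum_i r_i P_i\Vert_1\leq\sum_i\vert r_i\vert\Vert P_i\Vert_1$ saturate; note in passing that this also forces $\vert\cI\vert\leq d^2$, consistent with the dimension of $\linear{\cH}$.
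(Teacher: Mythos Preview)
Your proposal is correct and follows essentially the same route as the paper: the upper bound comes directly from the decomposition of $\cT$ into unitary channels (the paper simply cites the definition of $\nu$, while you spell out the grouping into $\cO_1,\cO_2$, but it is the same argument), and the lower bound is obtained exactly as in the paper by showing that $\{(\1\otimes U_i)\ket{\Gamma}/\sqrt{d}\}_i$ is orthonormal so that $\Vert J_\cT\Vert_1=d\sum_i\vert r_i\vert$, then invoking Theorem~\ref{thm:boundtracenorm}.
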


\begin{proof}
By the definition of physical implementability, we have $\ce{\cT}\leq\log(\sum_i|r_i|)$.
In the following we show that $\norm{J_{\cT}}{1}=d\sum_i |r_i|$.
This, together with the lower bound in Theorem~\ref{thm:boundtracenorm}, concludes the proof.
Notice that
\begin{align}\label{eq:unital-1}
    J_{\cT}
:= (\id\otimes \cN)(\proj{\Gamma})
 = d\sum_i r_i (\1\otimes U_i)\proj{\Psi}(\1\otimes U_i^\dagger).
\end{align}
We claim $\{(\1\otimes U_i)\ket{\Psi}\}_{i\in\cI}$ is a set of orthogonal unit vectors, i.e.,
\begin{align}\label{eq:unital-2}
    \bra{\Psi}(\1\otimes U_i^\dagger)(\1\ox U_j)\ket{\Psi} =\delta_{ij}.
\end{align}
When $i=j$, the equation can be verified directly. When $i\neq j$,
it holds that
\begin{align}
    \bra{\Psi}(\1\otimes U_i^\dagger)(\1\ox U_j)\ket{\Psi}
&=  \frac{1}{d}\sum_{m,n=0}^{d-1}\bra{mm}(\1\otimes U_i^\dagger U_j) \ket{nn} \\
&=  \frac{1}{d}\sum_{m,n=0}^{d-1}\bra{m} U_i^\dagger U_j \ket{m}\\
&=  \frac{1}{d}\tr(U_i^\dagger U_j)\\
&=  0,
\end{align}
where the last equality follows from the orthogonality condition.
Eqs.~\eqref{eq:unital-1} and~\eqref{eq:unital-2} together
yield $\norm{J_{\cT}}{1}=d\sum_i |r_i|$.
We are done.
\end{proof}

\paragraph*{Invertible map of the qudit depolarizing channel.}
As an interesting implication of Theorem~\ref{lem:unital},
we can analytically derive  the non-physical implementability of
the invertible maps of both the depolarizing and dephasing channels.
In the following, we first formally define  the depolarizing and dephasing channel
in the $d$-dimensional Hilbert space $\cH$.
Let $\bZ_d:=\{0,1,\cdots,d-1\}$ which forms a
ring w.r.t. to addition and multiplication modulo $d$.
The set of discrete Weyl operators $\{W_{x,z}\}_{x,z\in\bZ_d}$ in $\cH$ is
defined as~\cite[Section 4.1.2]{watrous2018theory}
\begin{align}
  W_{x,z} := X^x Z^z,
\end{align}
where the generalized Pauli operators $X$ and $Z$ are defined as
\begin{align}\label{eq:generalized Pauli Z}
    X := \sum_{k\in\bZ_d}\ketbra{k+1}{k},\quad
    Z := \sum_{k\in\bZ_d}\zeta^k\proj{k},
\end{align}
with the $d$-root of unity $\zeta:=e^{2\pi i/d}$.
Notice that the set of Weyl operators satisfies the orthogonality condition:
\begin{align}\label{eq:orthogonality condition}
    \tr\left[W_{x,z}^\dagger W_{x',z'}\right]
= \begin{cases}
  d, &\text{ if } (x,z) = (x',z') \\
  0, &\text{ otherwise.}
\end{cases}
\end{align}
The qudit depolarizing quantum channel $\cD_{d,\epsilon}$, where $\epsilon\in[0,1]$, is defined
in terms of the Weyl operators as
\begin{align}\label{eq:depolarizing channel}
    \cD_{d,\epsilon}(\rho) := (1-\epsilon)\rho
        + \frac{\epsilon}{d^2}\sum_{x,z\in\bZ_d} W_{x,z}\rho W_{x,z}^\dagger.
\end{align}

In \cite{takagi2020optimal}, Takagi gave the optimal sampling cost for qudit quantum channel and qubit dephasing channel, using ad-hoc techniques. Here we show that those bounds can be derived from Theorem \ref{lem:unital}.

Regarding the depolarizing channel of great practical interests, we have the following.

\begin{lemma}\label{lemma:depolarizing}
For $\epsilon\in[0,1)$,
it holds that $\ce{\cD_{d,\epsilon}^{-1}}=\log\frac{1+(1-2/d^2)\epsilon}{1-\epsilon}$.
\end{lemma}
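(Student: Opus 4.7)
The plan is to express $\cD_{d,\epsilon}^{-1}$ as a mixed unitary map with respect to the Weyl operators $\{W_{x,z}\}_{x,z\in\bZ_d}$, and then invoke Theorem~\ref{lem:unital}, whose orthogonality hypothesis is already supplied by~\eqref{eq:orthogonality condition}. The $\log$ of the $\ell_1$-norm of the coefficients should then directly yield the claimed value.

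First, I would rewrite $\cD_{d,\epsilon}$ in a more compact form using the identity $\frac{1}{d^2}\sum_{x,z\in\bZ_d} W_{x,z}(\cdot) W_{x,z}^\dagger = \Omega$, where $\Omega(\rho)=\tr(\rho)\,I/d$ is the completely depolarizing channel (this follows from the Weyl operators forming an orthogonal unitary operator basis). This gives the concise form $\cD_{d,\epsilon} = (1-\epsilon)\id + \epsilon\,\Omega$. Using $\Omega\circ\Omega = \Omega$ and $\Omega\circ\id = \id\circ\Omega = \Omega$, I would then make the ansatz $\cD_{d,\epsilon}^{-1} = \alpha\id + \beta\,\Omega$ and solve the linear system arising from $\cD_{d,\epsilon}^{-1}\circ\cD_{d,\epsilon} = \id$, obtaining $\alpha = 1/(1-\epsilon)$ and $\beta = -\epsilon/(1-\epsilon)$. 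Explicitly,
\begin{align}
\cD_{d,\epsilon}^{-1} = \frac{1}{1-\epsilon}\id - \frac{\epsilon}{1-\epsilon}\Omega,
\end{align}
which is well-defined precisely for $\epsilon\in[0,1)$.

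Substituting the Weyl-operator expression for $\Omega$ and absorbing the $\id$ term into the $W_{0,0}=I$ component, I would obtain the mixed-unitary decomposition
\begin{align}
\cD_{d,\epsilon}^{-1} = \sum_{(x,z)\in\bZ_d^2} r_{x,z}\, W_{x,z}(\cdot) W_{x,z}^\dagger,
\end{align}
with
\begin{align}
r_{0,0} = \frac{d^2-\epsilon}{(1-\epsilon)d^2}, \qquad r_{x,z} = -\frac{\epsilon}{(1-\epsilon)d^2}\;\;\text{for } (x,z)\neq(0,0).
\end{align}
One checks that $\sum_{x,z} r_{x,z} = 1$ (consistent with trace preservation from Property~\ref{pro:hermitian}) and that $r_{0,0}>0$ for all $\epsilon\in[0,1)$, so the absolute values split cleanly.

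Finally, applying Theorem~\ref{lem:unital} to this decomposition gives $\ce{\cD_{d,\epsilon}^{-1}} = \log\sum_{x,z}|r_{x,z}|$, and a short arithmetic computation,
\begin{align}
r_{0,0} + (d^2-1)\cdot\frac{\epsilon}{(1-\epsilon)d^2} = \frac{d^2 - \epsilon + (d^2-1)\epsilon}{(1-\epsilon)d^2} = \frac{1+(1-2/d^2)\epsilon}{1-\epsilon},
\end{align}
yields the stated value. I expect no genuine obstacle: the only points requiring mild care are (i) verifying that the sign pattern of the $r_{x,z}$ is as claimed over the whole range $\epsilon\in[0,1)$, and (ii) ensuring the Weyl-operator identity for $\Omega$ is applied correctly; both are routine.
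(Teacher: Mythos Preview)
Your proposal is correct and follows essentially the same approach as the paper: write $\cD_{d,\epsilon}^{-1}$ as a mixed unitary map with respect to the Weyl operators and then apply Theorem~\ref{lem:unital} together with the orthogonality relation~\eqref{eq:orthogonality condition}. The only cosmetic difference is that the paper quotes the explicit form of $\cD_{d,\epsilon}^{-1}$ from~\cite[Theorem~1]{takagi2020optimal}, whereas you derive it directly via the ansatz $\alpha\id+\beta\Omega$; the resulting coefficients and final arithmetic agree exactly.
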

\begin{proof}
First of all, notice that~\cite[Theorem 1]{takagi2020optimal}
\begin{align}
  \cD_{d,\epsilon}^{-1}(\rho)
= \left(1+\frac{(d^2-1)\epsilon}{d^2(1-\epsilon)}\right)\id(\rho) -
    \frac{\epsilon}{d^2(1-\epsilon)}
    \sum_{(x,z)\in\bZ_d\times \bZ_d\backslash(0,0)} W_{x,z}\rho W_{x,z}^\dagger
\end{align}
and thus $\cD_{d,\epsilon}^{-1}$ is a mixed unitary map w.r.t.
the set of Weyl operators $\{W_{x,z}\}$.
Since this set satisfies the orthogonality condition~\eqref{eq:orthogonality condition},
Theorem~\ref{lem:unital} implies that
\begin{align}
  \ce{\cD_{d,\epsilon}^{-1}}
= \log\left(1+\frac{(d^2-1)\epsilon}{d^2(1-\epsilon)} +
    (d^2-1)\frac{\epsilon}{d^2(1-\epsilon)}\right)
= \log\frac{1+(1-2/d^2)\epsilon}{1-\epsilon}.
\end{align}
\end{proof}

\paragraph*{Invertible map of the qubit dephasing channel.}
Let $\sigma_z$ be the Pauli Z operator.
Notice that when $d=2$, the operator $Z$ define in~\eqref{eq:generalized Pauli Z} reduces
to $\sigma_z$.
The qubit dephasing quantum channel $\cF_\epsilon$, where $\epsilon\in[0,1]$, is defined as
\begin{align}
    \cF_\epsilon(\rho) := (1-\epsilon)\rho + \epsilon \sigma_z \rho \sigma_z.
\end{align}
We have the following.
\begin{lemma}\label{lemma:dephasing}
For $\epsilon\in[0,1/2)$, it holds that $\ce{\cF_\epsilon^{-1}}=\log\frac{1}{1-2\epsilon}$.
\end{lemma}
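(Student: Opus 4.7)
The plan is to mimic the strategy used in the proof of Lemma~\ref{lemma:depolarizing}: exhibit $\cF_\epsilon^{-1}$ explicitly as a mixed unitary map with respect to an orthogonal family of unitaries, and then invoke Theorem~\ref{lem:unital}.

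First I would compute the inverse map analytically. Since $\cF_\epsilon$ is a convex combination of $\id$ and conjugation by $\sigma_z$, and since $\sigma_z^2 = I$, the algebra generated by $\{\id, \sigma_z(\cdot)\sigma_z\}$ is closed under composition and two-dimensional. So the natural ansatz is
\begin{align}
   \cF_\epsilon^{-1}(\rho) = a\,\rho + b\,\sigma_z \rho \sigma_z,
\end{align}
for unknown real scalars $a,b$. Imposing $\cF_\epsilon^{-1}\circ\cF_\epsilon = \id$ yields the $2\times 2$ linear system
\begin{align}
  a(1-\epsilon) + b\epsilon = 1, \qquad a\epsilon + b(1-\epsilon) = 0,
\end{align}
which for $\epsilon\in[0,1/2)$ has the unique solution $a = (1-\epsilon)/(1-2\epsilon)$, $b = -\epsilon/(1-2\epsilon)$. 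Note that $a + b = 1$, which is consistent with $\cF_\epsilon^{-1}$ being HPTP.

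Next I would observe that the family $\mathscr{U} = \{I, \sigma_z\}$ satisfies the mutual orthogonality condition of Theorem~\ref{lem:unital}, since $\tr(I^\dagger \sigma_z) = \tr(\sigma_z) = 0$. Hence $\cF_\epsilon^{-1}$ is a mixed unitary map with respect to an orthogonal unitary set, and Theorem~\ref{lem:unital} applies directly.

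Finally, plugging the coefficients into Theorem~\ref{lem:unital} gives
\begin{align}
  \ce{\cF_\epsilon^{-1}} = \log\left(|a| + |b|\right)
  = \log\left(\frac{1-\epsilon}{1-2\epsilon} + \frac{\epsilon}{1-2\epsilon}\right)
  = \log\frac{1}{1-2\epsilon},
\end{align}
as claimed. No step is really an obstacle here: the only thing to verify carefully is the sign pattern of the coefficients and that $1-2\epsilon > 0$ is precisely what makes both $\cF_\epsilon^{-1}$ and the logarithm well defined, which is why the hypothesis $\epsilon\in[0,1/2)$ appears.
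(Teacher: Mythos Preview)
Your proposal is correct and follows essentially the same approach as the paper: exhibit $\cF_\epsilon^{-1}$ as a mixed unitary map in $\{I,\sigma_z\}$, note that $I$ and $\sigma_z$ are orthogonal, and apply Theorem~\ref{lem:unital}. The only cosmetic difference is that the paper cites the inverse formula from~\cite{takagi2020optimal} while you derive it from the ansatz; your derivation is fine and arguably more self-contained.
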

\begin{proof}
The proof follows similarly the argument of Lemma~\ref{lemma:depolarizing}
by noticing that~\cite[Theorem 1]{takagi2020optimal}
\begin{align}
  \cF_{\epsilon}^{-1}(\rho)
=\frac{1-\epsilon}{1-2\epsilon}\id(\rho)-\frac{\epsilon}{1-2\epsilon}\sigma_z(\rho)\sigma_z
\end{align}
and that the operators $\1$ and $\sigma_z$ are orthogonal.
\end{proof}
\begin{remark}
Unlike Lemma~\ref{lemma:depolarizing}, Lemma~\ref{lemma:dephasing} does not hold in general
for the qudit dephasing channel
$\cF_{d,\epsilon}(\rho) := (1-\epsilon)\rho + \epsilon Z \rho Z^\dagger$, 
where $Z$ is define in~\eqref{eq:generalized Pauli Z}.
This dues to that $Z$ does not satisfy the relation $Z^\dagger = Z$
in general. It remains as an interesting problem to compute analytically
$\ce{\cF_{d,\epsilon}^{-1}}$ for $d\geq3$.
\end{remark}

\begin{remark}
We emphasize that Lemma~\ref{lemma:depolarizing} and Lemma~\ref{lemma:dephasing}
have been shown previously in Theorems 1 and 2 of~\cite{takagi2020optimal}, respectively.
In that Ref., the author assumed that the linear map is decomposed w.r.t. the set of
implementable operations, which in turn is a strict subset of the set of quantum channels.
In this sense, our obtained results
enhance the previous ones by specifying the \emph{fundamental} limit on the
physical implementability of these two linear maps.
\end{remark}

\section{Applications in error mitigation}\label{sec:error mitigation}

In this section we endow the proposed physical implementability measure with
an operational interpretation within the quantum error mitigation framework
as it 
establishes the lower bound of the sampling cost achievable via the quasiprobability decomposition technique. 

\subsection{Physical implementability is the  sampling cost}

In quantum computing, especially in the NISQ era~\cite{preskill2018quantum},
a common computational task is to estimate the 
expected value $\tr[\rho A]$ for
a given observable $A$ and a quantum state $\rho$. Without loss of generality, we may assume $A$ to be diagonal in the computational basis, otherwise one can apply a unitary to $\rho$ firstly and then measure in the computational basis. That is, we consider measurement  in the form of
\begin{align}
    A=\sum_{x\in\{0,1\}^n}A(x)\ketbra{x}{x},\; A(x)\in [-1,1].
\end{align}

If $\rho$ can be prepared perfectly, one can get $\tr(\rho A)$ directly by a sequence of measurements. 
However, the preparation of $\rho$ inevitably suffers
from noise that can be modeled by some CPTP $\cO\in\channel{\cH}$,
rendering the value $\tr[\cO(\rho) A]$.
\textit{How can we deal with the noise and recover the expected value anyway?} There are many recently proposed error mitigation methods to accomplish this task (see, e.g., \cite{temme2017error,endo2018practical,takagi2020optimal,Bonet-Monroig2018,Endo2020,Wang2021,Cai2021}).

Since the preparation procedure is given \emph{a priori}, one feasible way is to perform the
invertible map $\cO^{-1}$ (assumed to exist), yielding
\begin{align}\label{eq:invertible}
  \tr\left[\cO^{-1}\circ\cO(\rho)A\right] = \tr[\rho A],
\end{align}
which successfully \JJQ{mitigates the noise}.
This is quite similar to the quantum channel correction task in the first glance, since we can think of $\cO^{-1}$ as a correcting procedure.

However, the main problem with~\eqref{eq:invertible} is that the invertible map $\cN\equiv\cO^{-1}$
might not be physically implementable, i.e., it is not a CPTP, though
we have already shown in Property~\ref{pro:hermitian} that $\cN$ is
both Hermitian- and trace-preserving. Thus the problem that we are faced with is to
physically approximate the effect of an HPTP map, which would unavoidably incur
extra computation cost due to approximation. 
We use the probabilistic error cancellation technique~\cite{crowder2015linearization,temme2017error,howard2017application,endo2018practical,takagi2020optimal} to deal with this issue
and it turns out that the incurred computation cost is quantified exactly by the physical
implementability measure. The precise error mitigation procedure goes as follows:
\begin{enumerate}
  \item We optimally decompose $\cN$ into a linear 
        combination of CPTP maps
        $\{\eta_\alpha,\cO_\alpha\}_\alpha$
        as~\eqref{eq:implementability}. Then
        $2^{\ce{\cN}} = \sum_\alpha\vert\eta_\alpha\vert$.
  \item We iterate the following sampling procedure $M$ times.
        In the $m$-th round where $m\in[M]$,
      \begin{enumerate}[2.1.]
        \item We sample a CPTP map $\cO^{(m)}$ from $\{\cO_\alpha\}_\alpha$
              with probability $\{\vert\eta_\alpha\vert/\sum_\alpha\vert\eta_\alpha\vert\}_\alpha$.
              Denote by $\eta^{(m)}$ the sampled coefficient.
        \item Apply CPTP $\cO^{(m)}$ to $\cO(\rho)$, measure each qubit in the computational basis. Denote $s^{(m)}\in\{0,1\}^n$ as the
        binary string obtained and $A\left(s^{(m)}\right)$ as the measurement value. Write a random variable \begin{align}
            X^{(m)}=2^{\ce{\cN}}\opn{sgn}\left(\eta^{(m)}\right)
                   A\left(s^{(m)}\right) \in[-2^{\ce{\cN}},2^{\ce{\cN}}].
        \end{align}
              where $\opn{sgn}:\bR\to\pm1$ is the sign function defined as
        $\forall x\leq0$, $\opn{sgn}(x)=-1$ and $\forall x>0$, $\opn{sgn}(x)=1$.
      \end{enumerate}
  \item Using the data obtained in the second step, we compute the       following \emph{empirical mean value}
        \begin{align}
          \xi &:= \frac{1}{M} \sum_{m=1}^M X^{(m)} =\frac{2^{\ce{\cN}}}{M}\sum_{m=1}^M\opn{sgn}\left(\eta^{(m)}\right)
                   A\left(s^{(m)}\right) \label{eq:empirical mean value}
        \end{align}
  \item Output $\xi$ as an estimation of the target expected value $\tr[\rho A]$.
\end{enumerate}

Now we analyze the efficiency and accuracy of the above estimation procedure.
First of all, we show in Lemma~\ref{lemma:unbiased} that $\xi$ is actually an unbiased
estimator of $\tr[\rho A]$. This justifies the validity of the estimator, as long as $M$
is sufficiently large, due to the weak law of large numbers.
What's more, we can apply the Hoeffding inequality to ensure
that $M=2^{2\nu+1}\log(2/\varepsilon)/\delta^2$ number of samples in Step 2
would estimate the target expectation value $\tr[\rho A]$ within error $\delta$ with probability no less than $1-\varepsilon$:
\begin{align}
&\quad \Pr\left\{\vert\xi-\tr[\rho A]\vert\geq\delta\right\}
\stackrel{(a)}{\leq} 2\exp\left(-\frac{2M^2\delta^2}{4M2^{2\nu}}\right) \leq \varepsilon \\
\Rightarrow&\quad M \geq 2^{2\nu+1}\log(2/\varepsilon)/\delta^2,
\label{eq:number of copies}
\end{align}
where $(a)$ follows from the fact that $\vert X^{(m)}\vert\leq2^\nu$.
We can thus identify the unique role of $2^\nu$ in quantifying the number of rounds required to reach desired estimating precision, empowering the mathematically defined implementability
measure $\ce{\cN}$ an interesting operational meaning.
\begin{lemma}\label{lemma:unbiased}
The random variable $\xi$ defined in~\eqref{eq:empirical mean value} is an unbiased estimator of $\tr[\rho A]$.
\end{lemma}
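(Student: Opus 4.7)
The plan is to compute $\mathbb{E}[\xi]$ directly and show it equals $\tr[\rho A]$. Since the samples $X^{(1)}, \ldots, X^{(M)}$ are i.i.d., linearity of expectation reduces the task to showing $\mathbb{E}[X^{(m)}] = \tr[\rho A]$ for a single round.

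First I would condition on the sampled index $\alpha$. Conditional on $\cO^{(m)} = \cO_\alpha$ and $\eta^{(m)} = \eta_\alpha$, the string $s^{(m)}$ is obtained by measuring $\cO_\alpha(\cO(\rho))$ in the computational basis. Since $A = \sum_x A(x) \proj{x}$ is diagonal in that basis, the Born rule gives
\begin{align}
  \mathbb{E}\left[A(s^{(m)}) \,\middle|\, \cO^{(m)} = \cO_\alpha\right] = \tr\!\left[A\,\cO_\alpha(\cO(\rho))\right].
\end{align}
The index $\alpha$ itself is sampled with probability $|\eta_\alpha|/\sum_\beta|\eta_\beta| = |\eta_\alpha|/2^{\ce{\cN}}$, so that
\begin{align}
  \mathbb{E}[X^{(m)}] = \sum_\alpha \frac{|\eta_\alpha|}{2^{\ce{\cN}}}\cdot 2^{\ce{\cN}}\opn{sgn}(\eta_\alpha)\,\tr\!\left[A\,\cO_\alpha(\cO(\rho))\right].
\end{align}

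The key algebraic simplification is $|\eta_\alpha|\opn{sgn}(\eta_\alpha) = \eta_\alpha$, which cancels the $2^{\ce{\cN}}$ factors and yields
\begin{align}
  \mathbb{E}[X^{(m)}] = \sum_\alpha \eta_\alpha \tr\!\left[A\,\cO_\alpha(\cO(\rho))\right] = \tr\!\left[A\left(\sum_\alpha \eta_\alpha \cO_\alpha\right)\!(\cO(\rho))\right],
\end{align}
where I pulled the sum inside using linearity of $\tr$ and of linear maps. By the choice of the decomposition in Step~1, $\sum_\alpha \eta_\alpha \cO_\alpha = \cN = \cO^{-1}$, so $\cN \circ \cO = \id$ and $\mathbb{E}[X^{(m)}] = \tr[A\rho]$, exactly as in~\eqref{eq:invertible}. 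Averaging over $M$ rounds preserves this, giving $\mathbb{E}[\xi] = \tr[\rho A]$.

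There is no real obstacle here: the argument is a direct conditional-expectation calculation. The only subtlety worth flagging is that $\opn{sgn}$ as defined in the paper sends $0$ to $-1$, but this is harmless because the contribution of any $\eta_\alpha = 0$ term vanishes in $|\eta_\alpha|\opn{sgn}(\eta_\alpha) = \eta_\alpha$ anyway. I would mention in passing that the diagonal assumption on $A$ is without loss of generality, as noted in the preceding paragraph, so the argument extends to general observables by absorbing the diagonalizing unitary into $\cO_\alpha$.
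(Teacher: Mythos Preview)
Your proof is correct and follows essentially the same approach as the paper: both compute the expectation by conditioning on the sampled index $\alpha$, use the Born rule to evaluate the conditional mean of $A(s^{(m)})$ as $\tr[A\,\cO_\alpha(\cO(\rho))]$, collapse $|\eta_\alpha|\opn{sgn}(\eta_\alpha)$ to $\eta_\alpha$, and then invoke $\sum_\alpha \eta_\alpha \cO_\alpha = \cN = \cO^{-1}$. Your remark on the $\opn{sgn}(0)$ edge case is a nice touch but not needed for the argument.
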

\begin{proof}
Denote by $E(X)$ the expectation of a random variable $X$.
By ~\eqref{eq:empirical mean value} we have
\begin{align}
    E(\xi)&= \frac{2^{\ce{\cN}}}{M}\sum_{m=1}^M E\left(\opn{sgn}\left(\eta^{(m)}\right)
                   A\left(s^{(m)}\right)\right)\\
                   &= \frac{2^{\ce{\cN}}}{M}\sum_{m=1}^M \sum_\alpha \frac{|\eta_\alpha|}{2^{\ce{\cN}}}
         \opn{sgn}\left(\eta_{\alpha}\right)
                   \sum_{s\in\{0,1\}^n}  
                    \tr\left[\cO_\alpha\circ \cO(\rho)\ketbra{s}{s} \right] A\left(s\right)\\
                   &=\frac{2^{\ce{\cN}}}{M} \sum_{m=1}^M \sum_\alpha \frac{\eta_\alpha}{2^{\ce{\cN}}}
                   \tr\left[\cO_\alpha \circ \cO(\rho)A \right]\\
                   &=\frac{2^{\ce{\cN}}}{M} \sum_{m=1}^M \frac{1}{2^{\ce{\cN}}}\tr\left[\cN \circ \cO(\rho)A \right]\\
                   &=\tr\left[\rho A \right].
\end{align}
\end{proof}

\vspace{0.1in}

One may wonder that the proposed error mitigation setting is a bit unrealistic as the assumption
of one can implement all CPTPs perfectly directly remove the necessity of error mitigation in the very first place.
To deal with this concern, we describe the setting
in more details. In our error mitigation setting,
the state preparation and the error mitigation procedures 
are performed by different parties.
Namely, user $A$ will prepare the quantum state $\rho$ 
subject to inevitable quantum noise $\mathcal{O}$, 
while user $B$ can perform error mitigation on the received quantum state $\mathcal{O}(\rho)$ noiselessly. 
Since user $B$ only knows the error 
model $\mathcal{O}$ but has no information about 
the ideal state $\rho$ that user $A$ aims to prepare,
he \JJQ{cannot} prepare $\rho$ directly by himself.  

At the first glance, our setting  is 
different from the more relevant setting where \emph{all}
quantum operations are subject to noise. 
However, these two settings are in fact closely related, 
argued as follows. 
For any invertible quantum noise $\mathcal{O}$, 
we decompose $\mathcal{O}^{-1}$ into a linear
combination of CPTPs as
\begin{align}
    \mathcal{O}^{-1}:=\sum_{\alpha} \eta_\alpha \mathcal{O}_\alpha
\end{align}
Consider the setting where every operation is subject to the noise $\mathcal{O}$. The target is to implement a noiseless
quantum operation $\mathcal{U}$. Notice that
\begin{align}
\mathcal{U} 
=\mathcal{O}\mathcal{O}^{-1}\mathcal{U}
=\sum_\alpha \eta_\alpha \mathcal{O}\mathcal{O}_\alpha\mathcal{U},
\end{align}
inspiring a feasible method to implement $\mathcal{U}$ 
ideally: We firstly sample $\alpha$, and then perform 
the quantum operation $\mathcal{O}_\alpha \mathcal{U}$. 
This quantum operation will be inevitably corrupted 
by the noise $\cO$.
Statistically, the operation we effectively 
perform is exactly $\mathcal{U}$.

\subsection{Properties of the above error mitigation procedure}

We explore the operational properties of this error mitigation procedure based on the nice properties $\nu$ and its connection to the sampling cost: 

\begin{itemize}
\item First, since the physical implementability measure $\nu$ 
        can be computed 
        efficiently via semidefinite programs (cf. Theorem~\ref{thm:N2}), 
        we can estimate the sampling cost of arbitrary linear maps, 
        yielding a feasible way to deal with quantum noise in the NISQ era~\cite{preskill2018quantum}. 
        \JJQ{However, 
        since the overall sampling cost of a quantum circuit
        is given by the product of the sampling cost
        of each gate in the circuit,
        the overall cost in our error mitigation procedure has an exponential scaling in the number of quantum gates, indicating that error mitigation cannot substitute the role of error correction.  
        Indeed, the proposed error mitigation is an auxiliary method to alleviate quantum errors 
        since NISQ devices do not have enough qubits to support error correcting codes.}
\item Second, the additivity of $\nu$ w.r.t. 
        tensor product of linear maps (cf. Theorem~\ref{thm:additivity})
        implies that for parallel quantum noises, global error mitigation has \textit{no advantage} over local error mitigation,
        i.e., dealing with quantum noises individually.
        On the other hand, the 
        subadditivity of $\nu$ w.r.t. 
        composition of linear maps (cf. Theorem~\ref{thm:composition-subadditivity})
        implies that for sequential quantum noises,
        treating them as a whole might be beneficial and reduce
        the sampling cost, compared to handling these noises
        one by one.
        This is intuitive since these quantum noises might cancel
        mutually in the sequential procedure.
\item Third, we obtain  the sampling cost analytically
        for some linear maps which are the inverse map of 
        practically interesting quantum CPTP maps.
        Prominent examples include the amplitude damping channel 
        (cf. Theorem~\ref{thm:AD channel}),
        the generalized amplitude damping channel
        (cf. Lemma~\ref{lemma:generalized AD}),
        the qudit depolarizing channel
        (cf. Lemma~\ref{lemma:depolarizing}),
        and the qubit dephasing channel
        (cf. Lemma~\ref{lemma:dephasing}).
        These results shed lights on dealing with 
        quantum noises in the NISQ era since they 
        remain as the lower bound of the sampling cost which quasiprobability method may achieve.
\end{itemize}

\section{Conclusions and discussions}\label{sec:conclusion}

\JJQ{In this work, we ask and study the question how can one simulate the action of a general linear map on a quantum state when only quantum operations (CPTP maps) are available. We apply the quasiprobability sampling method and use mathematical tools from semidefinite programming to answer this  question, which leads to the first operational quantification of the physical implementability (or non-physicality from another perspective) for general linear maps.}

We offered a systematic way to approximate a general linear map, that may not be physically implementable, by decomposing it into a linear combination of physically implementable quantum operations, mathematically characterized by completely positive and trace-preserving maps (CPTPs), motivated by the appealing quasi-probability decomposition technique. We introduced the physical implementability measure $\nu(\mathcal{N})$ of a linear map $\cN$, which is the least amount of negative portion in the quasi-probability decomposition, as a quantifier of 
how well $\cN$ can be approximated by CPTPs. 
We show that $\nu$ is efficiently computable by semidefinite programs.
We proved that $\nu$ satisfies many interesting properties such as 
faithfulness, additivity with respect to the tensor product, 
and unitary channel invariance. 
We also derived upper and lower bounds of $\nu$ 
based on the trace norm of the target linear map's Choi operator, 
and obtained analytic expressions for several practical linear maps. 
Finally, we empowered this measure an operational meaning in the
quantum error mitigation task by showing that it 
establishes the lower bound of the sampling cost achievable  via the quasiprobability decomposition technique. 

We expect that our proposed framework can find more applications
in quantum information and quantum computation. 
It is also interesting to further explore 
the structure of invertible qubit HPTP maps and derive an analytic expression
for the physical implementability measure.
In Section~\ref{sec:error mitigation}, 
we contributed an efficient method to estimate 
the expected value $\tr[\rho A]$ in the presence of noise characterized
by some noisy quantum channel $\cN$. 
This is an important task in quantum computing,
especially in the NISQ era. 
We yearn for new and novel methods for  this task. 

\paragraph{Acknowledgements.}
J. J. and K. W. contributed equally to this work. This work was done when J. J. was a research intern at Baidu Research.

\addcontentsline{toc}{section}{References}
\bibliographystyle{unsrtnat}
\bibliography{ref.bib}

\begin{appendices}

\section{Robustness measure}\label{sec:robustness}

\subsection{Definition}
As argued around Theorem~\ref{thm:Monotonicity},
the set of CPTP maps is treated as the free set
when defining the physical implementability measure from the resource theoretic perspective.
This motivates us to consider this problem within the quantum resource theory
framework~\cite{chitambar2019quantum} and explore the intensively investigated
robustness measure~\cite{harrow2003robustness,vidal1999robustness,
steiner2003generalized,brandao2007entanglement,almeida2007noise,
takagi2019operational,piani2015necessary,napoli2016robustness,piani2016robustness,
anand2019quantifying,takagi2019general,yuan2019universal,liu2019resource,
bae2019more,takagi2019operational,skrzypczyk2019robustness,takagi2020application,takagi2020optimal}.
More precisely, Let $\cN$ be an HPTP map,
we define the (absolute) \emph{robustness of physical implementability} of $\cN$ as
\begin{align}\label{eq:RoC}
   R(\cN) := \min_{\cT\text{~is CPTP}}
              \left\{s\geq0\sbar \frac{\cN + s\cT}{1+s}\text{~is CPTP}\right\}.
\end{align}
Note the minimization is well defined since the completely depolarizing channel is free.
Intuitively, $R(\cN)$ quantifies how \emph{robust} the linear map $\cN$
is against any physical implementation.
Alternatively, we can express $R(\cN)$ in terms of its Choi operator as
\begin{subequations}\label{eq:RoC-SDP}
\begin{align}
   \text{\bf Primal:}\quad
   R(\cN) =  \min&\; s \\
              \text{s.t.}&\; J_{\cN} + sJ_{\cT} = (1+s) J_{\cK}\label{eq:RoC-SDP-1} \\
                        &\; J_{\cT} \geq 0, \tr_BJ_{\cT} = \1_A \label{eq:RoC-SDP-2}\\
                        &\; J_{\cK} \geq 0, \tr_BJ_{\cK} = \1_A.\label{eq:RoC-SDP-3}
\end{align}
\end{subequations}
We can further simplify the above program using the trace-preserving condition.
Assume the pair $(s,\cT,\cK)$ achivese $R(\cN)$ in~\eqref{eq:RoC-SDP}.
Set $\wt{J}:= (1+s)J_{\cK}$, then $\tr_B\wt{J}=(1+s)\1_A$ due to Eq.~\eqref{eq:RoC-SDP-3}.
Eq.~\eqref{eq:RoC-SDP-1} guarantees that
$\wt{J} - J_{\cN} = sJ_{\cT}\geq 0$, following from the fact that $s\geq0$ and Eq.~\eqref{eq:RoC-SDP-2}.
As so, we can simplify~\eqref{eq:RoC-SDP} as
\begin{subequations}\label{eq:RoC-SDP-Simplified}
\begin{align}
\text{\bf Simplified Primal:}\quad
   R(\cN) =     \min&\; s \\
         \text{s.t.}&\; \wt{J} \geq J_{\cN} \\
                    &\; \tr_B\wt{J} = (s+1)\1_A\label{eq:RoC-SDP-Simplified-2} \\
                    &\; \wt{J} \geq 0, s \geq 0
\end{align}
\end{subequations}
Correspondingly, the dual SDP is given by
\begin{subequations}\label{eq:RoC-SDP-Simplified-Dual}
\begin{align}
\text{\bf Simplified Dual:}\quad
   R(\cN) = \max&\; \tr\left[M_{AB}J_{\cN}\right] - 1 \\
            \text{s.t.}&\; \tr N_A = 1 \\
                       &\; M_{AB} \leq N_A\ox\1_B \\
                       &\; M_{AB} \geq 0
\end{align}
\end{subequations}
One may check that the above SDP satisfies strong duality by the
Slater's theorem~\cite{watrous2017semidefinite}.

\subsection{Relation with the physical implementability}

It turns out that the physical implementability measure $\nu$~\eqref{eq:implementability}
is closely related to the robustness of physical 
implementability $R$~\eqref{eq:RoC},
resembling the relations previously obtained 
in~\cite[Eq. (1)]{howard2017application}
and~\cite[Eq. (6)]{takagi2020optimal}.

\begin{theorem}
Let $\cN$ be an HPTP map. It holds that
\begin{align}
  2^{\ce{\cN}} = 2R(\cN) + 1.
\end{align}
\end{theorem}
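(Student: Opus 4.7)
The plan is to prove the equality by showing both inequalities via explicit constructions that translate between a feasible point of the robustness SDP and a feasible point of the two-CPTP decomposition characterization of $\nu$ given by Theorem~\ref{thm:N2}. The key bridge between the two quantities is that any decomposition of the trace-preserving map $\cN$ into a difference of two positive multiples of CPTPs has a built-in constraint from trace preservation.

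For the direction $2^{\ce{\cN}} \leq 2R(\cN) + 1$, I would take an optimal triple $(s,\cT,\cK)$ for the robustness program~\eqref{eq:RoC}, so that $(\cN + s\cT)/(1+s) = \cK$ is CPTP. Rearranging gives $\cN = (1+s)\cK - s\cT$, which is precisely a decomposition of the form in Theorem~\ref{thm:N2} with $\eta_1 = 1+s$, $\eta_2 = s$, $\cO_1 = \cK$, $\cO_2 = \cT$. Hence $2^{\ce{\cN}} \leq \eta_1 + \eta_2 = 2s + 1 = 2R(\cN)+1$.

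For the reverse direction, I would take an optimal decomposition $\cN = \eta_1\cO_1 - \eta_2\cO_2$ with $\eta_i \geq 0$ and $\cO_i$ CPTP, witnessing $2^{\ce{\cN}} = \eta_1 + \eta_2$ via Theorem~\ref{thm:N2}. The crucial observation is that since $\cN$ is trace-preserving and both $\cO_1,\cO_2$ are trace-preserving, evaluating $\tr[\cN(\rho)] = 1$ on any state $\rho$ yields $\eta_1 - \eta_2 = 1$. Consequently $\eta_1 = \eta_2 + 1$, so $\eta_1 + \eta_2 = 2\eta_2 + 1$. Setting $s := \eta_2$, $\cT := \cO_2$, $\cK := \cO_1$ gives $\cN + s\cT = \eta_1\cO_1 = (1+s)\cK$, so $(s,\cT,\cK)$ is feasible for~\eqref{eq:RoC}, yielding $R(\cN) \leq s = (2^{\ce{\cN}}-1)/2$, i.e., $2R(\cN)+1 \leq 2^{\ce{\cN}}$.

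There is no substantive obstacle here: the two measures are really two normalizations of the same underlying minimization, and the trace-preserving constraint pins down the relationship between $\eta_1$ and $\eta_2$ in the $\nu$-optimal decomposition exactly enough to match the $(1+s,s)$ pair appearing in the definition of $R$. The only mild point worth double-checking is the validity of the ``$\min$'' (rather than $\inf$) in~\eqref{eq:RoC}, but this is guaranteed by the SDP representation~\eqref{eq:RoC-SDP-Simplified} together with Slater's condition (e.g., the completely depolarizing channel provides a strictly feasible point), so the optimum is attained and the construction above goes through with equality throughout.
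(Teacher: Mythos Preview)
Your proof is correct and follows essentially the same approach as the paper's: both directions proceed by rearranging an optimal solution of one program into a feasible solution of the other, using the trace-preserving constraint $\eta_1-\eta_2=1$ (or $\eta^+-\eta^-=1$) as the bridge. The only cosmetic difference is that for the ``$\geq$'' direction you invoke the two-CPTP form from Theorem~\ref{thm:N2} directly, whereas the paper starts from a general optimal ensemble $\{(\eta_\alpha,\cO_\alpha)\}$ and regroups the positive and negative parts into two CPTPs---but that is exactly the content of Theorem~\ref{thm:N2}, so the arguments coincide.
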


\begin{proof}
Note that this theorem can be proved using a similar technique
presented in~\cite[Appendix A]{takagi2020optimal}.
Here we write down the proof procedure for completeness.

``$\leq$'': Assume the channel pair $(\cT,\cK)$ achieves $R(\cN)$, i.e.,
\begin{align}
    \frac{\cN + R(\cN)\cT}{1+ R(\cN)} = \cK.
\end{align}
Rearranging the elements leads to $\cN = (1+ R(\cN))\cK - R(\cN)\cT$,
yielding a feasible decomposition of $\cN$. As so, we obtain from Theorem~\ref{thm:N2}
that
\begin{align}
    2^{\ce{\cN}} \leq 1+ R(\cN) + R(\cN) = 1 + 2R(\cN).
\end{align}

``$\geq$'': Assume the ensemble $\{(\eta_\alpha,\cO_\alpha)\}_{\alpha\in\cX}$
achieves $\ce{\cN}$~\eqref{eq:implementability}.
Let $\cX^+$ be the collection of symbols for which the sign of $\eta_\alpha$ is positive
and similarly for $\cX^-$. We have $\cX^+\cup\cX^-=\cX$ and $\cX^+\cap\cX^-=\emptyset$.
Set $\eta^+:=\sum_{\alpha\in\cX^+}\vert\eta_\alpha\vert$
and $\eta^-:=\sum_{\alpha\in\cX^-}\vert\eta_\alpha\vert$.
By assumption, $2^{\ce{\cN}}=\eta^++\eta^-$. Since $\cN$ is trace-preserving, we also have
$\eta^+-\eta^-=1$. We can divide the channels
into two groups according to the sign of their coefficients:
\begin{align}
    \cN = \sum_{\alpha\in\cX} \eta_\alpha\cO_\alpha
&= \sum_{\alpha\in\cX^+}\vert\eta_\alpha\vert\cO_\alpha
 - \sum_{x\in\cX^-}\vert\eta_\alpha\vert\cO_\alpha \\
&= \eta^+\left(\sum_{x\in\cX^+}\frac{\vert\eta_\alpha\vert}{\eta^+}\cO_\alpha\right)
  - \eta^-\left(\sum_{x\in\cX^-}\frac{\vert\eta_\alpha\vert}{\eta^-}\cO_\alpha\right) \\
&= (1+\eta^-)\cT - \eta^-\cK,
\end{align}
where $\cT:=\sum_{x\in\cX^+}\eta_\alpha/\eta^+\cO_\alpha$
and $\cK:=\sum_{x\in\cX^-}\vert\eta_\alpha\vert/\eta^-\cO_\alpha$ are well-defined quantum channels.
This gives $\cR(\cN)\leq \eta^- = (2^{\ce{\cN}}-1)/2$. We are done.
\end{proof}

\section{Proof of Eq.~\eqref{eq:r-SDP-Dual}}\label{appx:proof-of-the-dual}

In this Appendix, we derive the dual program for the primal program given in~\eqref{eq:r-SDP-Prime}.
Recall the primal SDP
\begin{subequations}\label{eq:tmp-SDP}
\begin{align}
    -2^{\ce{\cN}} = \max&\quad -(p_1+p_2) \\
      \text{s.t.}&\quad J_{\cN} = J_1 - J_2 \\
                 &\quad \tr_B J_1 = p_1\1_A\label{eq:tmp-SDP-c1} \\
                 &\quad \tr_B J_2 = p_2\1_A\label{eq:tmp-SDP-c2} \\
                 &\quad J_1, J_2 \geq 0
\end{align}
\end{subequations}

Introducing the Lagrange multipliers $M_{AB}\in\herm{AB}$ and $N_A,K_A\in\herm{A}$,
the Lagrange function of this primal SDP is given by
\begin{align}
  &\;   L(M_{AB},N_A,K_A) \\
:=&\; -(p_1 + p_2) + \langle M, J_{\cM} - J_1 + J_2\rangle
               + \langle N, p_1\1_A - \tr_B J_1 \rangle
               + \langle K, p_2\1_A - \tr_B J_2 \rangle \\
=&\; \langle M_{AB}, J_{\cM}\rangle + p_1(\tr[N_A]-1) + p_2(\tr[K_A]-1) \\
 &\qquad   + \langle J_1, - M_{AB} - N_A\ox\1_B \rangle
    + \langle J_2, M_{AB} - K_A\ox\1_B \rangle.
\end{align}
Since $J_1\geq0$, it must hold that $- M_{AB} - N_A\ox\1_B\leq 0$ 
otherwise the inner norm is unbounded.
Similarly, we have $M_{AB} - K_A\ox\1_B\leq 0$, $\tr[N_A] \leq 1$, 
and $\tr[K_A] \leq 1$. This leads to the dual SDP
\begin{subequations}
\begin{align}
 - 2^{\ce{\cN}} = \min&\quad \tr[M_{AB} J_{\cM}] \\
      \text{s.t.}&\quad \tr[N_A] \leq 1  \\
                 &\quad \tr[K_A] \leq 1 \\
                 &\quad M_{AB} + N_A\ox\1_B \geq 0 \\
                 &\quad - M_{AB} + K_A\ox\1_B \geq 0
\end{align}
\end{subequations}
Changing $\min$ to $\max$, we get 
\begin{subequations}
\begin{align}
  2^{\ce{\cN}} = \max&\quad -\tr[M_{AB} J_{\cM}] \\
      \text{s.t.}&\quad \tr[N_A] \leq 1  \\
                 &\quad \tr[K_A] \leq 1 \\
                 &\quad M_{AB} + N_A\ox\1_B \geq 0 \\
                 &\quad - M_{AB} + K_A\ox\1_B \geq 0
\end{align}
\end{subequations}
Since $M_{AB}$ is Hermitian, so is $- M_{AB}$. 
Substituting $M_{AB}$ with $-M_{AB}$ and
renaming the variables $N_A, K_A$, we can rewrite the above program as
\begin{subequations}\label{eq:r-SDP-Dual-inequality}
\begin{align}
2^{\ce{\cN}} = 
             \max&\quad \tr[M_{AB} J_{\cM}] \\
      \text{s.t.}&\quad \tr[N_A] \leq 1 \label{eq:r-SDP-Dual-inequality1} \\
                 &\quad \tr[K_A] \leq 1 \label{eq:r-SDP-Dual-inequality2} \\
                 &\quad M_{AB} + N_A\ox\1_B \geq 0
                    \label{eq:r-SDP-Dual-inequality3} \\
                 &\quad - M_{AB} + K_A\ox\1_B \geq 0 
                    \label{eq:r-SDP-Dual-inequality4}
\end{align}
\end{subequations}

Comparing~\eqref{eq:r-SDP-Dual-inequality} with~\eqref{eq:r-SDP-Dual},
we are left to show that the 
inequalities in~\eqref{eq:r-SDP-Dual-inequality1}
and~\eqref{eq:r-SDP-Dual-inequality2} can be further restricted to equalities. This is true since for any feasible $(M_{AB},N_A,K_A)$, 
we can reset it to be $(M_{AB},N_A+(1-\tr[N_A])\1_A/d_A,N_A+(1-\tr[N_A])\1_A/d_A)$. This new solution will make \eqref{eq:r-SDP-Dual-inequality1}
and~\eqref{eq:r-SDP-Dual-inequality2} to be equality. At the same time, it satisfies constraints \eqref{eq:r-SDP-Dual-inequality3}-\eqref{eq:r-SDP-Dual-inequality4} and  keep the objective value unchanged.




\end{appendices}

\end{document}